\documentclass{article}

\usepackage[french,english]{babel}
\usepackage [vscale=0.76,includehead]{geometry}
\usepackage{relsize}
\usepackage{graphicx,amsmath,fullpage,mathptmx,helvet, float, subcaption}
\usepackage[utf8]{inputenc}
\usepackage[T1]{fontenc}
\usepackage{amsmath, amssymb, amsthm, mathrsfs, amsfonts}
\usepackage{comment}
\usepackage{enumerate, enumitem}

\usepackage{graphicx} 
\usepackage{algorithm}
\usepackage{algpseudocode} 

\usepackage{environ,tabularx}

\usepackage{hyperref}
\hypersetup{
  pdftitle = {},
  pdfauthor = {TODO},
  colorlinks = true,
  linkcolor = black!30!red,
  citecolor = black!30!green
}

\usepackage{mathtools}
\newcommand{\defeq}{\vcentcolon=}

\newcommand{\set}[2]{   \left\{ #1 \ \big| \ #2 \right\}   }

\newtheorem{prop}{Proposition}
\newtheorem{lem}[prop]{Lemma}

\newtheorem{thm}[prop]{Theorem} 
\newtheorem{cor}[prop]{Corollary}

\theoremstyle{definition}  
\newtheorem*{rem}{Remark} 

\makeatletter
\newcommand{\problemtitle}[1]{\gdef\@problemtitle{#1}}
\newcommand{\probleminput}[1]{\gdef\@probleminput{#1}}
\newcommand{\problemquestion}[1]{\gdef\@problemquestion{#1}}
\NewEnviron{problem}{
  \problemtitle{}\probleminput{}\problemquestion{}
  \BODY
  \par\addvspace{.5\baselineskip}
  \noindent
  \begin{tabularx}{\textwidth}{@{\hspace{\parindent}} l X c}
    \multicolumn{2}{@{\hspace{\parindent}}l}{\sc \@problemtitle} \\
    \textbf{Input:} & \@probleminput \\
    \textbf{Question:} & \@problemquestion
  \end{tabularx}
  \par\addvspace{.5\baselineskip}
}
\makeatother
\usepackage{marginnote}
\usepackage[textsize=small, color=red!30]{todonotes}

\title{Halfspace separation in geodesic convexity}
\author{Niranjan Nair}
\date{October 2025}

\begin{document}

\maketitle
\begin{abstract}
Let $G = V, E$ be a simple connected undirected graph.  A set $X \subseteq V$ is \emph{geodesically convex} if for any pair of vertices $x, y \in X$, all vertices on all shortest paths in $G$ from $x$ to $y$ are contained in $X$. A set $H \subseteq V$ is said to be a {halfspace} if both $H$ and its complement (denoted by $H^c$) are convex. Given two sets $A, B \subseteq V$, the { halfspace separation} problem asks if there exist complementary halfspaces $H, H^c$ such that $A \subseteq H$ and $B \subseteq H^c$. The halfspace separation problem is known to be NP-complete for the geodesic convexity of general graphs. We show that geodesic halfspace separation is polynomial for weakly bridged graphs, pseudo-modular graphs, and the basis graphs of matroids.
\end{abstract}
\section{Introduction}
 A convexity space is a pair $(X, \mathfrak{C})$  where $X$ is a set and $\mathfrak{C}$ is a collection of subsets of $X$ such that $\emptyset, X \in \mathfrak{C}$ and $\mathfrak{C}$ is closed under taking intersection. Graphs provide several notions of convexity, such as geodesic (shortest path) and monophonic (induced path) convexity. A halfspace in a convexity space is a convex set whose complement is also convex. Halfspaces in linear
spaces arise in separation theorems, which are fundamental mathematical results with numerous
applications, in particular, in optimization and machine learning. The halfspace separation problem, introduced in the paper \cite{SHW23}, asks whether two sets $A$ and $B$ in a convexity space $(X, \mathfrak{C})$ can be separated by complementary halfspaces. Halfspace separation is known to be NP-complete in the geodesic convexity of graphs \cite{SHW23}. However, it can be solved in polynomial time for monophonic convexity. This was shown independently in \cite{Che24, BET24} and \cite{ENV24}. The approaches in \cite{Che24} and \cite{BET24} were combined in \cite{BCET25}, and reduces the problem to the separation of some special sets using the \emph{three step method}, which can then be formulated using {\sc 2-SAT}. The paper \cite{Che24} also gives an algorithm for geodesic separation by the method of halfspace enumeration. They show that the VC-dimension of the set of geodesic halfspaces of a graph $G$ is at most the clique number $\omega(G)$ when $G$ is chordal or Helly. The \emph{Sauer lemma} [insert ref] asserts that a set family of VC-dimension $d$ contains at most $O(n^d)$ sets. This implies that chordal and Helly graphs have at most $O(n^\omega)$ halfspaces. Thus we can enumerate all halfspaces in $O(\text{poly}(n)n^{\omega (G)})$ if $G$ is chordal or Helly. The paper \cite{Che24} also asks whether the complexity of halfspace enumeration can be improved to output polynomial time for chordal, Helly, and matroid basis graphs. 

We adapt the three step method introduced in \cite{Che24} to prove the following theorem.
\begin{thm}
\label{thm:main}
    Geodesic halfspace separation can be solved in polynomial time for weakly bridged graphs, pseudo-modular graphs, and basis graphs of matroids.
\end{thm}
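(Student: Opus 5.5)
The plan is to adapt the three step method of \cite{Che24, BCET25}. Its output is a {\sc 2-SAT} formula whose satisfying assignments are exactly the halfspace separations of $A$ and $B$. The whole argument rests on one structural property, which I will establish separately for each of the three graph classes.

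\textbf{Step 1 (closure and first reduction).} Replace $A$ and $B$ by their geodesic convex hulls $\mathrm{conv}(A)$ and $\mathrm{conv}(B)$. These are computable in polynomial time by iterating the interval operator $I(x,y)$. If the hulls intersect, answer no. A halfspace $H$ with $A\subseteq H$ must contain $\mathrm{conv}(A)$, and similarly for $B$, so nothing is lost.

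\textbf{Step 2 (the key property).} The central claim, proved class by class, is the following:
\begin{quote}
a partition $V = H \cup H^c$ into two sets is a pair of complementary halfspaces if and only if it is \emph{locally} consistent.
\end{quote}
Locally consistent means that for every edge $uv$, and for every pair of vertices at distance $2$ together with their common neighbours, the obvious convexity constraints hold. In particular, if $x,y \in H$ and $d(x,y)=2$, then every common neighbour of $x$ and $y$ lies in $H$, and symmetrically for $H^c$. This is a local-to-global convexity theorem.
\begin{itemize}
\item For weakly bridged graphs, I would invoke the known fact that a connected set is convex iff it is locally convex. This gives 2-convexity, i.e.\ closure under common neighbours of vertices at distance 2.
\item For pseudo-modular graphs I would use the analogous result of Bandelt–Mulder-type triangle/quadrangle conditions. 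The point is that a connected locally convex set is convex.
\item For matroid basis graphs I would use Maurer's characterisation: intervals are governed by the square and pyramid conditions, so convexity is again decided by distance-$2$ configurations.
\end{itemize}
One must also show that both $H$ and $H^c$ are connected. I expect this follows from the halfspace structure, for example by a Kakutani-type argument or by showing that the boundary edges separate the graph into exactly two connected pieces.

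\textbf{Step 3 (encoding as {\sc 2-SAT}).} Introduce one Boolean variable $x_v$ for each vertex $v$, meaning $v \in H$. Fix $x_v$ for $v\in A\cup B$. For each pair $u,w$ at distance $2$ and each common neighbour $z$, add the clauses $(x_u\wedge x_w)\Rightarrow x_z$ and the complementary one. These are 3-clauses, not 2-clauses. This is the main obstacle.

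To reduce them to 2-clauses, I would follow the three step method and work with the edge-wise halfspace structure rather than raw vertex variables. First, guess (polynomially many choices) one boundary edge $ab$, with $a\in H$ and $b\in H^c$; the three steps let one grow $\mathrm{conv}(A\cup\{a\})$ and $\mathrm{conv}(B\cup\{b\})$. Second, argue that in these classes the remaining undecided vertices interact only pairwise. Concretely:
\begin{itemize}
\item every vertex $v$ that is still free has its membership forced by its distances to $a$ and to $b$, via the sets $W(a,b)=\{v: d(v,a)<d(v,b)\}$, so that
\item the constraints become implications of the form $x_u\Rightarrow x_z$.
\end{itemize}
This is where the class-specific geometry (weak modularity, pseudo-modularity, the basis exchange property) is used.

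Finally, solve the resulting {\sc 2-SAT} instance in linear time. Repeat over the $O(|E|)$ guesses of the boundary edge, which gives a polynomial algorithm overall. The step I expect to be hardest is proving, for each class, that once the boundary edge is fixed, the ternary constraints collapse to binary ones.
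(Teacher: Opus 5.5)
There is a genuine gap. Your outline has the right frame: guess a boundary edge, use that connected locally convex sets are convex in these meshed graphs, and finish with {\sc 2-SAT}. But the step that carries the whole theorem is left open, and the mechanism you suggest for it cannot work.

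\textbf{The collapse to binary clauses.} After guessing $ab$, you need more than the hulls $\mathfrak{c}(A\cup\{a\})$ and $\mathfrak{c}(B\cup\{b\})$; you need the \emph{shadow-closure} of this pair. For a shadow-closed osculating pair, every undecided vertex $v$ satisfies $d(v,a)=d(v,b)$ for \emph{every} edge $ab$ with $a\in A$, $b\in B$. Indeed, if $d(v,a)<d(v,b)$ then $a\in I(v,b)$, so $v$ already lies in the shadow $A/B$. Hence the vertices whose side is forced by distances to $a$ and $b$ (your $W(a,b)$) are exactly those already absorbed. The residue $V^+$ consists precisely of the vertices about which $W(a,b)$ says nothing, and nothing in your proposal turns the ternary constraints $(x_u\wedge x_w)\Rightarrow x_z$ into binary ones.

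The missing idea in the paper is to drop the local convexity clauses altogether and encode only \emph{binary necessary} conditions:
\begin{itemize}
\item equality clauses forcing $a_z=a_{x_0}$ whenever $z\in I(x,y)$ and $x,y$ share a gate $x_0\in S_x\cap S_y$ (a common neighbour of $a,b$ lying in $I(x,a)\cap I(x,b)$);
\item implication clauses $\overline{a_x}\lor a_y$ for $x\to_A y$ and $a_x\lor\overline{a_y}$ for $x\to_B y$;
\item clauses $\overline{a_x}\lor\overline{a_y}$ (resp.\ $a_x\lor a_y$) when $I(x,y)$ meets $B$ (resp.\ $A$).
\end{itemize}
Sufficiency is then the real work. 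A failure of local convexity yields $x,y\in A^+$ and $z\in B^+$ with $d(x,y)=2$, $z\in I(x,y)$, and all six distances to $a,b$ equal to some $k\ge 2$. Then one of the class-specific tools produces a contradiction:
\begin{itemize}
\item pseudo-modularity applied to $x,y,b$ for pseudo-modular graphs;
\item TC plus 3-SD for weakly bridged graphs;
\item TC plus IC and PC for basis graphs.
\end{itemize}
In each case one obtains a vertex that lands in $S_x\cap S_y$ or $S_x\cap S_z$, or in $A$ or $B$ in a way that violates shadow-closedness or a clause. Your proposal explicitly defers exactly this (``the step I expect to be hardest''), so as it stands it is a plan without its key lemma.

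\textbf{Circularity in Step 2.} The local-to-global lemma applies only to \emph{connected} sets. What must be shown connected is the pair $H=A\cup A^+$, $H^c=B\cup B^+$ read off from an arbitrary satisfying assignment, before you know it is a pair of halfspaces. Appealing to ``the halfspace structure'' or a Kakutani-type argument therefore assumes what is to be proved.

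In the paper, connectivity comes from the clauses themselves. By TC, each residue vertex $x$ has some $x_0\in S^{ab}_x$, which is adjacent to both $a$ and $b$. Every vertex of $I(x,x_0)$ lies in $I(x,a)\cap I(x,b)$. Hence, by the $A$- and $B$-implication clauses, every such vertex lies on the same side as $x$.
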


These classes of graphs are well studied in the field of metric graph theory \cite{MGT}.

Bridged graphs, which were introduced in \cite{SoCh} and \cite{FJ87}, are the graphs in which all isometric cycles have length three. In particular, chordal graphs are bridged. Bridged graphs are also exactly the graphs in which all balls centred around convex sets are convex. \cite{Che00} presented a local-to-global characterization of bridged graphs as the 1-skeleta of simply connected flag simplicial complexes with 6-large links (i.e. they do not have induced 4- or 5-cycles). Weakly bridged \cite{CO15} graphs generalize bridged graphs and are exactly the weakly modular graphs with convex balls.

Helly graphs are the graphs in which the family of balls satisfy the Helly property, i.e., every family of pairwise intersecting balls has a non-empty intersection. They are the discrete analogues of hyperconvex spaces (complete geodesic metric spaces in which the set of closed balls satisfies the Helly property). Helly graphs have been characterised metrically in the papers \cite{BaPe89, BaPr91} and topologically in \cite{CCHO20}, in a local-to-global way. Pseudo-modular graphs \cite{BM86}, also known as 3-Helly graphs, are the graphs in which any three pairwise intersecting balls have non-empty intersection.

The basis graph $G = G(\mathcal{B})$ of a matroid $\mathcal{B}$ is the graph whose vertices are the bases of $\mathcal{B}$ and edges are the pairs $A, B$ of bases differing by an elementary exchange. It is well known that basis graphs faithfully represent their matroids, thus studying the basis graph amounts to studying the matroid itself. 

Thus, the geodesic halfspace separation problem, which is NP-complete in general graphs, is polynomial for three large classes of graphs. This also implies that the halfspace enumeration problem is output polynomial for these graph classes, and thus answers the question posed in \cite{Che24}.

\section{Preliminaries}
\subsection{General preliminaries and notation}
\textbf{Graphs.} Let $G = (V, E)$ be a simple connected graph.
For any path or cycle $P$ in a graph $G$, the \emph{length} of $P$, denoted by $|P|$, is defined as the number of edges in $P$.
The distance between two vertices $x$ and $y$ in $G$, denoted by $d_G(x, y)$, is the length of a shortest path from $x$ to $y$ in $G$. We use $d(x, y)$ when the graph $G$ is clear from context. The distance between two sets $A, B \subseteq V$ is defined as $d(A, B) \defeq \min_{u \in A, v \in B}{d(u, v)}$. 

For two vertices $x$ and $y$, we use $x \sim y$ to denote that $x$ and $y$ are adjacent, and $x \not \sim y$ to denote that they are not adjacent. We use $I(u, v)$ to denote the set of all vertices on any shortest path from $u$ to $v$. Thus, $I(u, v) = \{x \in V \mid d(u, x) + d(v, x) = d(u, v)\}$.

A subgraph $H$ of $G$ is said to be \emph{isometric} if we have $d_H(u, v) = d_G(u, v)$ for all $u, v \in V(H)$. 

The \emph{neighbourhood} of a vertex $v \in V$ is defined as $N(v) \defeq \{x \in V \mid x \sim v\}$. $N[v] \defeq N(v) \cup \{v\}$ is called the \emph{closed neighbourhood} of $v$. Analogously, the closed neighbourhood of a set $X \subseteq V$ is defined as $N[X] \defeq \bigcup_{x \in X}N[x]$.

For any vertex $v \in V$, the ball of radius $k$ around $v$ is defined as $B_k(v) \defeq \{x \in V \mid d(x, v) \leq k \}$. Similarly, we define the sphere of radius $k$ around $v$ as $S_k(v) \defeq \{x \in V \mid d(x, v) = k \}$.

For any set $X \subseteq V$ and $k \in \mathbb{Z}^+$, we define $B_k(X) \defeq \{v \in V \mid \exists x \in X \text{ such that } d(v, x) \leq k \}$. 
\\ \\
\textbf{Convexity spaces.} A convexity space is a pair $(X, \mathfrak{C})$  where $X$ is a set and $\mathfrak{C}$ is a collection of subsets of $X$ such that $\emptyset, X \in \mathfrak{C}$ and $\mathfrak{C}$ is closed under taking intersection.
 The \emph{convex hull} of a set $S \subseteq X$ is defined as $\mathfrak{c}(S) \defeq \bigcap_{C \in \mathfrak{C} \mid S \subseteq C}S$. It is thus the smallest convex set containing $S$. A convex set $H \in \mathfrak{C}$ is called a \emph{halfspace} if its complement $H^c = X \setminus H$ is also convex.

The \emph{halfspace separation} problem \cite{SHW23} is defined as follows.
\begin{problem}
\problemtitle{HalfspaceSeparation}
\probleminput{Convexity space $(X, \mathfrak{C})$, disjoint sets $A, B \subseteq X$}
\problemquestion{Is there a halfspace $H$ such that $A \subseteq H$ and $B \subseteq H^c$?}
\end{problem}

It is easy to see that a halfspace $H$ separates $A$ and $B$ if and only if it separates $\mathfrak{c}(A)$ and $\mathfrak{c}(B)$.
\\ \\
\textbf{Geodesic convexity.} Let $G$ be a graph. A set $X \subseteq V$ is said to be \emph{geodesically convex} if $I(u, v) \subseteq X$ whenever $u, v \in X$. It is easy to see that this defines a convexity space on the vertices of $G$. Moreover, as $I(u, v)$ can be computed in polynomial time for any $u, v \in V$, the geodesic convex hull of any set $X \subseteq V$ can also be computed in polynomial time.

From now on, we shall use \emph{convex} to denote geodesic convexity unless otherwise specified.

\subsection{Weakly modular graphs, meshed graphs and pseudo-modular graphs}
A graph $G$ is said to be \emph{weakly modular with respect to a vertex $u$} if it satisfies the following conditions.
\begin{itemize}
    \item \emph{Triangle condition} TC($u$): For any two adjacent vertices $v, w$ with $d(u, v) = d(u, w)> 1$, there exists a common neighbour $x$ of $v$ and $w$ such that $d(u, x) = d(u, v)-1$.
    \item \emph{Quadrangle condition} QC($u$): For any three vertices $v, w, z$ with $v \sim z, z \sim w, v \not \sim w$ such that $2 \leq d(u, v) = d(u, w) =d(u, z) -1$, there exists a common neighbour $x$ of $v$ and $w$ such that $d(u, x) = d(u, v) -1$.
\end{itemize}

A graph $G$ is called \emph{weakly modular} \cite{BaCh,Che89} if it is weakly modular with respect to $u$ for all $u \in V$.
\\
Meshed graphs were introduced in the unpublished paper \cite{BMS94} and further studied in \cite{BaCh02, BeCCY23,CCHO20, CCCJ24, Ch_mat}.

A graph $G = (V, E)$ is called \emph{meshed} if for any vertex $u$ its distance function $d$ satisfies the following \emph{weak quadrangle condition}:
\\
(QC$^-$): for any $u, v, w \in V$ with $d(v, w) = 2$, there exists a common neighbour $x$ of $v$ and $w$ such
that $2d(u, x) \leq d(u, v) + d(u, w)$.

(QC$^-$) seems to be a relaxation of (QC), but it implies the triangle condition (TC) (that is not implied by (QC)). Conversely, (TC) and (QC) imply (QC$^-$) and thus weakly modular graphs are meshed.

A connected induced subgraph $H$ of a graph $G$ is said to be \emph{locally convex} \cite{Che89} or \emph{$P_3$-convex} if $I(x, y) \subseteq V(H)$ whenever $x, y \in V(H)$ and $d_G(x, y) = 2$.

We have the following lemma for meshed graphs:
\begin{lem}[\cite{Che24}]
\label{lem:convex-iff-locallyconvex}
A connected induced subgraph $H$ of a meshed graph $G$ is convex if and only if it is locally convex.
\end{lem}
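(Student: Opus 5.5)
We have to prove that a connected induced subgraph $H$ of a meshed graph $G$ is convex if and only if it is locally convex. One direction is immediate. If $H$ is convex, then $I(x,y)\subseteq V(H)$ for every pair $x,y\in V(H)$, in particular for pairs at distance $2$; so $H$ is locally convex. (Connectivity of $H$ is part of the hypothesis.)

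For the converse, assume $H$ is connected and locally convex. I would first show that $H$ is isometric, and then derive convexity from that.

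\textbf{Step 1: $H$ is isometric.} Suppose not, and choose $u,v\in V(H)$ with $d_H(u,v)>d_G(u,v)$. Among all such pairs and all shortest $u$--$v$ paths $P$ inside $H$, take one minimizing a potential such as $\sum_{z\in P} d_G(u,z)$, or lexicographically $(d_G(u,v),\ \text{that sum})$.
\begin{itemize}
\item Since $d_H(u,v)>d_G(u,v)$, the function $z\mapsto d_G(u,z)$ along $P$ cannot be strictly increasing by $1$ at each step. So $P$ contains a vertex $w$ with neighbours $v',v''$ on $P$ such that $d_G(u,w)\ge d_G(u,v')$ and $d_G(u,w)\ge d_G(u,v'')$: a local maximum, or a plateau.
\item Take $w$ to be the vertex of $P$ maximizing $d_G(u,\cdot)$, the first such vertex along $P$. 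Its predecessor $v'$ and successor $v''$ satisfy $d_G(v',v'')\le 2$. If $d_G(v',v'')=1$, shortcutting $w$ gives a shorter path in $H$, contradicting that $P$ is shortest in $H$. So $d_G(v',v'')=2$.
\item Apply (QC$^-$) to $u,v',v''$. It gives a common neighbour $x$ of $v'$ and $v''$ with $2d(u,x)\le d(u,v')+d(u,v'')$. By local convexity, $x\in I(v',v'')\subseteq V(H)$.
\item Replace $w$ by $x$ in $P$. The path stays in $H$ and has the same length. The potential strictly decreases when $d(u,w)>\tfrac12(d(u,v')+d(u,v''))$.
\end{itemize}
The plateau case, where $d(u,v')=d(u,w)=d(u,v'')$ and $x$ may also sit at that level, is the main obstacle. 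To handle it I would use the potential $\sum_{z\in P} d_G(u,z)^2$, or count vertices at the maximal level, and choose $w$ as an endpoint of a maximal plateau. Then one neighbour on $P$ is strictly lower, so $d(u,x)<d(u,w)$ and the potential drops. I expect this bookkeeping to be the delicate part.
\begin{itemize}
\item When no replacement decreases the potential, $d_G(u,\cdot)$ is monotone along $P$ with steps $0$ or $+1$. A step of $0$ is an edge $ab$ with $d(u,a)=d(u,b)$; since $b$ is not maximal on a finished path, such a step is ruled out by the same argument. So $P$ has length $d_G(u,v)$, a contradiction.
\end{itemize}

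\textbf{Step 2: convexity.} Now let $x,y\in V(H)$ and let $z\in I(x,y)$. Induct on $d(x,y)$; the cases $d(x,y)\le 2$ follow from local convexity. Take a $G$-geodesic from $x$ to $y$ through $z$. I would compare it with an $H$-geodesic between $x$ and $y$, which is also a $G$-geodesic by Step 1. The aim is to use (QC$^-$) with base point $y$ to propagate membership: let $x_1$ be the neighbour of $x$ on the geodesic through $z$, and show $x_1\in H$ by finding $h\in H$, with $h$ adjacent to $x$ and $d(h,y)=d(x,y)-1$, such that $x_1\in I(\cdot,\cdot)$ of a distance-$2$ pair in $H$. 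Then apply induction to $x_1,y$.

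A cleaner alternative for Step 2 is to mimic Step 1. Take a geodesic $Q$ through $z$ from $x$ to $y$, and a geodesic $P$ in $H$. Among geodesics $Q$ from $x$ to $y$, pick the one whose vertices outside $H$ are fewest. Then use (QC$^-$) and local convexity to push the first vertex of $Q$ outside $H$ into $H$ while keeping $z$ on the path, or to show directly that $z$ lies in a distance-$2$ interval of $H$-vertices. I expect the step showing the first exit vertex is sandwiched between $H$-vertices at distance $2$ to be the crux, again handled by the extremal choice.
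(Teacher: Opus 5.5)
Your easy direction and Step~1 are fine after one repair. In your second bullet the maximiser of $d(u,\cdot)$ on $P$ may be the endpoint $v$ (values $0,1,1,2$). Use instead the first vertex $p_i$ at which $d(u,\cdot)$ stops increasing. Then $d(u,p_{i-1})=d(u,p_i)-1$ and $d(u,p_{i+1})\le d(u,p_i)$, so (QC$^-$) gives $2d(u,x)\le 2d(u,p_i)-1$, hence $d(u,x)\le d(u,p_i)-1$. The plain potential $\sum_{p\in P}d(u,p)$ then strictly drops, and the plateau issue disappears.

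The genuine gap is Step~2, which is only a plan. It is also where the meshed case differs from the weakly modular one. Your first plan breaks at exactly that point. Suppose $h\in H$, $h\sim x$, $d(h,y)=d(x,y)-1$ and $d(h,x_1)=2$. Then (QC$^-$) with base point $y$ only yields a common neighbour $t$ of $h,x_1$ with $d(t,y)\le d(x,y)-1$. When equality holds, $t\notin I(h,y)$. So neither the induction hypothesis nor local convexity puts $t$ in $H$, and $x_1$ is never placed inside $I(p,q)$ with $p,q\in H$, $d(p,q)=2$. Under (QC) you would get $d(t,y)=d(x,y)-2$, and the argument would close. Your second plan gives no mechanism. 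Moreover, ``pushing the first exit vertex into $H$ while keeping $z$ on the path'' is self-contradictory when $z$ is that vertex.

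The missing idea is to rerun your Step~1 descent on $H$-paths, with the base point at the outside vertex.

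\textbf{Reduction.} On a $G$-geodesic from $x$ to $y$ through $z$, let $x'$ be the last vertex of $H$ before $z$ and $y'$ the first one after. The successor $z'$ of $x'$ satisfies $z'\notin H$, $z'\sim x'$ and $z'\in I(x',y')$. Let $m=d(x',y')$.

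\textbf{Descent.} By Step~1 there are paths of length $m$ in $H$ from $x'$ to $y'$, and all of them are $G$-geodesics. Choose one, $Q=(q_0,\dots,q_m)$, minimising $\sum_i d(z',q_i)$. Suppose $2d(z',q_i)>d(z',q_{i-1})+d(z',q_{i+1})$ for some interior $i$. Apply (QC$^-$) at base point $z'$ to the pair $q_{i-1},q_{i+1}$, which is at distance $2$. Local convexity puts the resulting vertex in $H$, and swapping it for $q_i$ lowers the potential, a contradiction.

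\textbf{Conclusion.} Hence the increments of $i\mapsto d(z',q_i)$ lie in $\{-1,0,1\}$, are non-decreasing, and sum to $(m-1)-1=m-2$. So they are either $-1,1,\dots,1$ or $0,0,1,\dots,1$. In the first case $q_1=z'$, so $z'\in H$. In the second case $z'\sim q_2$ with $d(x',q_2)=2$, so $z'\in I(x',q_2)\subseteq H$. Both contradict $z'\notin H$.
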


A graph is said to be \emph{pseudo-modular} \cite{BM86} if for any three vertices $u, v, w$ such that $1 \leq d(v, w) \leq 2$ and $d(u, v) = d(u, w) = k \geq 2$, there exists a vertex $x$ such that $x \sim v, w$ and $d(u, x) = k-1$. This condition implies the triangle and quadrangle condition, and thus pseudo-modular graphs are weakly modular. 

\subsection{Helly graphs}
The graphs in which every family of pairwise intersecting balls have a non-empty intersection are known as \emph{Helly} graphs. They are the discrete analogues of hyperconvex (injective) spaces (complete geodesic metric spaces in which the set of closed balls
satisfies the Helly property). In perfect analogy with hyperconvexity, there is a close relationship between Helly graphs and absolute retracts. A graph is an \emph{absolute retract} exactly when it is
a retract of any larger graph into which it embeds isometrically. Absolute retracts and Helly graphs are the same \cite{HR87}. In particular, for any graph $G$ there exists a smallest Helly graph comprising $G$ as an
isometric subgraph \cite{JMP86, Pesch88}. This is the discrete analogue of Isbell's theorem \cite{Isb64} which states that every metric space has an injective hull.
Helly graphs have been characterised metrically in the papers \cite{BaPe89, BaPr91} and topologically in \cite{CCHO20}, in a local-to-global way. 
\begin{lem}[\cite{BM86}]
Helly graphs are pseudo-modular.
\end{lem}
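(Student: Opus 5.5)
We need to prove that Helly graphs are pseudo-modular, i.e. that for vertices $u,v,w$ with $1\le d(v,w)\le 2$ and $d(u,v)=d(u,w)=k\ge 2$ there is a common neighbour $x$ of $v$ and $w$ with $d(u,x)=k-1$. We apply the Helly property directly to a family of three balls, $B_1(v)$, $B_1(w)$ and $B_{k-1}(u)$, and take $x$ in their common intersection.

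First we check that these balls pairwise intersect. For $B_1(v)\cap B_1(w)$: since $d(v,w)\le 2$, either $v\sim w$ and $v$ itself lies in both balls, or $d(v,w)=2$ and any middle vertex of a shortest $v$--$w$ path lies in both. For $B_1(v)\cap B_{k-1}(u)$: since $d(u,v)=k\ge 1$, the neighbour of $v$ on a shortest $v$--$u$ path is at distance $k-1$ from $u$. The same argument handles $B_1(w)\cap B_{k-1}(u)$.

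By the Helly property for balls, there is a vertex $x$ with $d(x,v)\le 1$, $d(x,w)\le 1$ and $d(u,x)\le k-1$. The triangle inequality gives $d(u,v)\le d(u,x)+d(x,v)$, so $k\le (k-1)+d(x,v)$. Hence $d(x,v)\ge 1$, so $d(x,v)=1$ and $d(u,x)=k-1$. Symmetrically $d(x,w)=1$. Thus $x\sim v$, $x\sim w$ and $d(u,x)=k-1$, which is exactly the pseudo-modularity condition.

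The only delicate step is excluding $x=v$ or $x=w$. The distance constraint $d(u,x)\le k-1<k$ handles this automatically, so I expect no real obstacle. The hypothesis $k\ge 2$ is not actually needed in the argument; $k\ge 1$ suffices.
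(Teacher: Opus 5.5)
Your proposal is correct and follows essentially the same approach as the paper: apply the Helly property to $B_1(v)$, $B_1(w)$, $B_{k-1}(u)$ and use the triangle inequality to force $d(x,v)=d(x,w)=1$ and $d(u,x)=k-1$. You also spell out the pairwise-intersection check and the equality argument, which the paper's proof only asserts.
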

\begin{proof}
Let $u, v, w$ be three arbitrary vertices in a Helly graph $G$ such that $1 \leq d(v, w) \leq 2$ and $d(u, v) = d(u, w) = k \geq 2$. Then the balls $B_1(v), B_1(w)$ and $B_{k-1}(u)$ pairwise intersect. As $G$ is Helly, $B_1(v) \cap B_1(w) \cap B_{k-1}(u)$ is non-empty. Let $x \in B_1(v) \cap B_1(w) \cap B_{k-1}(u)$. Then we must have $d(x, u) = k-1$ and $d(x, v) = d(x, w) = 1$. Thus, $G$ is pseudo-modular.
\end{proof}

In fact, pseudo-modular graphs are exactly the graphs in which  any three pairwise intersecting balls have non-empty intersection \cite{BM86}. They are hence also known as \emph{3-Helly} graphs.

\subsection{Bridged and weakly bridged graphs}
A graph $G$ is said to be \emph{bridged} \cite{FJ87,SoCh} if all isometric cycles of $G$ have length three. A \emph{bridge} $B$ of a cycle $C$ is an isometric (shortest) path between two vertices in $C$ that is shorter than each of the two arcs joining them in $C$. Thus, a graph is bridged if and only if every cycle of length greater than three has a bridge. In particular, all chordal graphs are bridged.

We have the following results about bridged graphs.

\begin{thm}[\cite{Che89}]
   A graph is bridged if and only if it is weakly modular and does not contain $C_4$ or $C_5$ as an induced subgraph. 
\end{thm}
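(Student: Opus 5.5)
The theorem has two directions. The direction from weakly modular to bridged is a local-to-global argument driven by (TC) and (QC). The direction from bridged to weakly modular uses the convexity properties of bridged graphs recalled in the introduction.

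\emph{Weakly modular and $C_4,C_5$-free $\Rightarrow$ bridged.} I would argue by contradiction. Suppose $C$ is an isometric cycle of length $n\ge 4$ chosen with $n$ minimal; we may assume $n\ge 6$, since an isometric $4$- or $5$-cycle is an induced $C_4$ or $C_5$.

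If $n=2k$ is even, fix $u\in C$ and let $z$ be its antipode on $C$. Because $C$ is isometric, $d(u,z)=k$. The two $C$-neighbours $v,w$ of $z$ satisfy $d(u,v)=d(u,w)=k-1\ge 2$ and $v\not\sim w$. Now QC($u$) gives $x\sim v,w$ with $d(u,x)=k-2$, so $x\not\sim z$. Hence $x,v,z,w$ induce a $C_4$, a contradiction.

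If $n=2k+1$ is odd, take the edge $vw$ of $C$ opposite $u$, so $d(u,v)=d(u,w)=k\ge 3$. TC($u$) gives $x\sim v,w$ with $d(u,x)=k-1$. Let $v',w'$ be the other $C$-neighbours of $v,w$; both lie at distance $k-1$ from $u$. Since $C$ is isometric and $n\ge 7$, the vertex $x$ is not on $C$.
\begin{itemize}
\item If $x\not\sim v'$, apply QC($u$) to $v',x$ through their common neighbour $v$ to get $y\sim v',x$ with $d(u,y)=k-2$.
\item If additionally $x\not\sim w'$, do the same on the other side to get $y'\sim w',x$.
\end{itemize}
The aim is to exhibit either an induced $C_4$ or $C_5$ (for instance $x,v,v',y$, or a $5$-cycle through $x,y,y'$) or a shorter cycle replacing the arc $v'vww'$ by $v'xw'$ or by a path through $y$. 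I would then show, using isometricity of $C$ and distances to $u$, that this shorter cycle is isometric. That contradicts the minimality of $n$. Checking that the rerouted cycle stays isometric, including the case where $x$ is adjacent to $v'$ or $w'$, is the fiddly part of this direction.

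\emph{Bridged $\Rightarrow$ weakly modular and $C_4,C_5$-free.} An induced $C_4$ or $C_5$ has diameter $2$, so its non-adjacent pairs are at distance $2$ in $G$. Such a cycle is therefore isometric, which is forbidden in a bridged graph. For (TC) and (QC) with respect to $u$, I would use the fact from the introduction that balls around convex sets are convex in bridged graphs; in particular every ball $B_{k-1}(u)$ is convex.

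For QC, let $v,w\in S_{k}(u)$ with common neighbour $z\in S_{k+1}(u)$ and $v\not\sim w$. I would take neighbours $a\sim v$ and $b\sim w$ in $S_{k-1}(u)$ and consider the cycle formed by $z,v,w$, geodesics from $v$ and $w$ to $u$ chosen to share a maximal common suffix, and the vertex $u$. Among all such configurations I would choose a counterexample minimising $k$, and then the cycle length. Every cycle of length $>3$ has a bridge. I would show that the bridge must produce either a common neighbour of $v,w$ in $S_{k-1}(u)$, or a smaller counterexample. The same scheme handles (TC) for an edge $vw$ in $S_k(u)$.

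I expect this last step to be the main obstacle. The difficulty is controlling where the bridge can land: it must be a shortcut that stays inside $B_{k-1}(u)\cup\{v,w\}$, by convexity of that ball. One then has to extract the common neighbour from it, which requires a careful induction on $k$.
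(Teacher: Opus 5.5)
The paper does not prove this statement; it only cites it from Chepoi (1989). So your proposal has to stand on its own.

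\textbf{First direction.} Your even case is correct, and so is your check that induced $C_4$ and $C_5$ are isometric. The odd case, however, is left as a plan, and the ``fiddly'' rerouting is unnecessary. Since $d_C(v',w')=3$ and $C$ is isometric, $d(v',w')=3$, so $x$ cannot be adjacent to both $v'$ and $w'$. If, say, $x\not\sim v'$, then QC($u$) applied to $v',x$ through $v$ gives $y\sim v',x$ with $d(u,y)=k-2$. Then $\{x,v,v',y\}$ induces a $C_4$: $x\not\sim v'$ by assumption, and $v\not\sim y$ because their distances to $u$ differ by $2$. So every isometric cycle of length at least $6$ already yields an induced $C_4$. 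Neither the minimality of $n$ nor the isometry of a shortened cycle plays any role.

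\textbf{Converse direction: this is the genuine gap.} Your bridge-plus-induction scheme is never carried out. For QC it also aims at the wrong target. In a bridged graph, a common neighbour $x\in S_{k-1}(u)$ of $v$ and $w$ would make $xvzw$ an induced $C_4$ (since $x\not\sim z$ and $v\not\sim w$). So QC can only hold vacuously, and no bridge will ever produce such an $x$.

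The missing idea is to use convexity of balls directly.
\begin{itemize}
\item For QC: $z$ is a common neighbour of $v,w$ and $d(v,w)=2$, so $z\in I(v,w)\subseteq B_k(u)$. This contradicts $d(u,z)=k+1$, so the QC configuration never occurs.
\item For TC: let $v\sim w$ in $S_k(u)$. Pick $a\in N(v)\cap S_{k-1}(u)$ and $b\in N(w)\cap S_{k-1}(u)$, and assume $a\not\sim w$, $b\not\sim v$ (otherwise you are done). Then:
\begin{itemize}
\item If $d(a,b)=3$, then $avwb$ is a geodesic, so $v\in I(a,b)\subseteq B_{k-1}(u)$, which is impossible.
\item If $d(a,b)=1$, then $avwb$ is an induced $C_4$.
\item If $d(a,b)=2$, a common neighbour $c$ of $a,b$ lies in $B_{k-1}(u)$ by convexity. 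If $c$ is adjacent to both $v$ and $w$, then $d(u,c)=k-1$ and $c$ is the required vertex. If $c$ is adjacent to exactly one of them, then $cvwb$ or $avwc$ is an induced $C_4$. If $c$ is adjacent to neither, then $avwbc$ is an induced $C_5$.
\end{itemize}
\end{itemize}
Combined with the absence of induced $C_4$ and $C_5$, which you already established, this gives TC with no induction on $k$ and no bridge analysis.
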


\begin{thm}[\cite{FJ87,SoCh}]
\label{thm:convex-neighbourhood}
   A graph $G$ is bridged if and only if $N[X]$ is g-convex for every g-convex set $X \subseteq V$.
\end{thm}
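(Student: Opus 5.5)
We prove both directions of Theorem~\ref{thm:convex-neighbourhood} separately: in a bridged graph, closed neighbourhoods of convex sets are convex, and conversely, convexity of every $N[X]$ (with $X$ convex) forces every isometric cycle to be a triangle.

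\textbf{Necessity} (if $N[X]$ is convex for all convex $X$, then $G$ is bridged). Suppose $G$ has an isometric cycle $C$ of length $m \ge 4$. We build a convex set $X$ with $N[X]$ not convex.
\begin{itemize}
\item If $m=2k$, take antipodal vertices $a,b$ on $C$ and let $X=\{a\}$ (a singleton is convex). Iterating, $N^{k-1}[a] = B_{k-1}(a)$ would be convex. The two neighbours of $b$ on $C$ lie in $B_{k-1}(a)$ and are at distance $2$ from each other, since $C$ is isometric and $m \ge 4$. Their interval contains $b$, which is at distance $k$ from $a$. This contradicts convexity.
\item If $m=2k+1$, take $X=\{a,a'\}$ for an edge $aa'$ of $C$; an edge is convex. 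Its $(k-1)$-fold neighbourhood is $B_{k-1}(\{a,a'\})$, which must therefore be convex. The vertex of $C$ opposite the edge is at distance $k$ from both $a$ and $a'$. Its two $C$-neighbours lie in the ball and are at distance $2$ from each other, again by isometry of $C$.
\end{itemize}
The only point to check is that the distances along $C$ are the true distances in $G$, which is exactly the isometry of $C$.

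\textbf{Sufficiency} (if $G$ is bridged, then $N[X]$ is convex for convex $X$). We use the characterisation stated just above: a bridged graph is weakly modular. Weakly modular graphs are meshed, so by Lemma~\ref{lem:convex-iff-locallyconvex} it suffices to show that $N[X]$ is connected and locally convex. Connectivity follows from the connectivity of $X$, which holds because $X$ is convex.

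For local convexity, take $u,v\in N[X]$ with $d(u,v)=2$ and a common neighbour $w$, and suppose for contradiction that $w\notin N[X]$. Then $u,v\notin X$, so there are $x,y\in X$ with $x\sim u$ and $y\sim v$. We have $d(x,y)\le 4$, and we split into cases on $d(x,y)$:
\begin{itemize}
\item If $x=y$ or $x\sim y$, the closed walk $w,u,x,(y),v,w$ is a short cycle of length at most $5$.
\item Otherwise, $X$ convex contains a geodesic from $x$ to $y$, which closes up with the path $x,u,w,v,y$ to form a cycle.
\end{itemize}
In each case we use the absence of induced $C_4$ and $C_5$ together with the triangle and quadrangle conditions relative to a base point in $X$. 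These should yield a chord from $w$ to $X$, or a vertex of $X$ adjacent to $w$, contradicting $w\notin N[X]$.

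It is cleaner to argue with the distance function $f(z)=d(z,X)$. We have $f(u),f(v)\le 1$ and, by assumption, $f(w)=2$. Choose $x$ to be a vertex of $X$ nearest to $w$, and consider the interval $I(w,x)$. Then $u$ and $v$ are both at distance $1$ from $X$ while $w$ is at distance $2$. Applying the quadrangle condition QC($x$), or the triangle condition TC($x$) when $u\sim v$, we want to obtain a common neighbour of $u$ and $v$ that is closer to $x$. Ruling out induced $C_4$ and $C_5$ then forces adjacency to $X$.

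\textbf{Main obstacle.} The hard step is handling the case where $u$ and $v$ are attached to \emph{different} vertices $x\ne y$ of $X$ that are far apart. Here we must combine the convexity of $X$ with weak modularity to show that some single vertex of $X$ is adjacent to $w$. The plan is to induct on $d(x,y)$, using QC and TC at base point $x$ to move $y$ along a geodesic inside $X$ toward $x$, while keeping the configuration.
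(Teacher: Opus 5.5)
The paper does not prove this statement (it is quoted from \cite{FJ87,SoCh}), so your proposal has to stand on its own.

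Your necessity direction is correct. Iterating $N[\cdot]$ from a vertex (even length) or an edge (odd length) of an isometric cycle of length $m\ge 4$ produces a convex ball that contains two cycle vertices at distance $2$ but not the cycle vertex between them.

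The sufficiency direction, however, has a genuine gap, and it is exactly where the content of the theorem lies. Reducing to local convexity of $N[X]$ via weak modularity and Lemma~\ref{lem:convex-iff-locallyconvex} is fine. But the decisive claim is that a common neighbour $w$ of $u,v\in N[X]\setminus X$ with $d(u,v)=2$ lies in $N[X]$. You only announce it (``these should yield a chord'', ``the plan is to induct''), and your last paragraph concedes that the case of distinct, non-adjacent attaching vertices $x\sim u$, $y\sim v$ in $X$ is open. The tools you name also do not work as stated:
\begin{itemize}
\item $u\sim v$ never occurs, since $d(u,v)=2$, so the TC$(x)$ branch is vacuous.
\item QC$(x)$ requires $d(x,w)\ge 3$, whereas you chose $x$ with $d(x,w)=d(w,X)=2$.
\item Sliding $y$ along a geodesic of $X$ destroys the adjacency $v\sim y$ that defines the configuration.
\end{itemize}

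What is missing is an extremal choice that makes induction unnecessary. Assume $w\notin N[X]$ and pick $x\in N(u)\cap X$, $y\in N(v)\cap X$ with $d(x,y)$ minimal. Note $d(x,y)\le 4$.
\begin{itemize}
\item If $d(x,y)=4$, the path $x,u,w,v,y$ is a geodesic, so $u\in I(x,y)\subseteq X$, a contradiction.
\item If $d(x,y)=3$, minimality and $I(x,y)\subseteq X$ force $d(x,v)=3$. Then TC$(x)$ on the edge $vy$ yields a vertex of $I(x,y)\subseteq X$ adjacent to $v$ at distance $2$ from $x$, contradicting minimality.
\item If $d(x,y)\le 1$, the cycle through $x,u,w,v,y$ is an induced $C_4$ or $C_5$. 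Minimality excludes the chords $uy$ and $vx$; $d(u,v)=2$ and $w\notin N[X]$ exclude the rest.
\item If $d(x,y)=2$ with middle vertex $z\in X$, minimality gives $z\not\sim u$ and $z\not\sim v$.
\begin{itemize}
\item When $d(u,y)=2$, TC$(y)$ on $uw$ gives $t\sim u,w,y$ with $t\notin X$. Also $t\not\sim x$, else $t\in I(x,y)\subseteq X$. Then $x,u,t,y,z$ carries an induced $C_4$ or $C_5$. The case $d(x,v)=2$ is symmetric.
\item Otherwise $d(u,y)=3$ and $d(z,u)=d(z,v)=2$. If $d(z,w)=3$, QC$(z)$ gives $s\sim u,v,z$, and $u,w,v,s$ is an induced $C_4$.
\item If instead $d(z,w)=2$, TC$(z)$ on $vw$ gives $s\sim v,w,z$, and the $4$-cycle $y,v,s,z$ forces $s\sim y$. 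Then $x,u,w,s,z$ is an induced $C_5$: $x\not\sim s$ since otherwise $s\in I(x,y)\subseteq X$, and $u\not\sim s$ since $d(u,y)=3$.
\end{itemize}
\end{itemize}
Induced $4$- and $5$-cycles are isometric, so every case contradicts bridgedness.
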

For any set $X \subseteq V$ and $k \in \mathbb{Z}^+$, we define $B_k(X) \defeq \{v \in V \mid \exists x \in X \text{ such that } d(v, x) \leq k \}$. 
\begin{thm}[\cite{FJ87,SoCh}]
\label{thm:convex-balls}
    A graph $G$ is bridged if and only if $B_k(X)$ is convex for every convex set $X \subseteq V$ for every $k \in \mathbb{Z}^+$.
\end{thm}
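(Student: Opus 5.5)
The plan is to deduce the statement from Theorem~\ref{thm:convex-neighbourhood} by induction on $k$, together with the identity $B_k(X) = N[B_{k-1}(X)]$ (with $B_0(X) = X$).

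For the forward direction, assume $G$ is bridged and $X$ is convex. First I would check the identity. If $d(v,x)\le k$ for some $x\in X$, then a shortest $v$--$x$ path has a vertex $v'$ adjacent to (or equal to) $v$ with $d(v',x)\le k-1$, so $v\in N[B_{k-1}(X)]$. Conversely, any $v\in N[B_{k-1}(X)]$ is within distance $k$ of some point of $X$ by the triangle inequality. The base case $k=1$ is Theorem~\ref{thm:convex-neighbourhood} applied to $X$. For the inductive step, $B_{k-1}(X)$ is convex by hypothesis, so Theorem~\ref{thm:convex-neighbourhood} applied to the convex set $B_{k-1}(X)$ gives that $N[B_{k-1}(X)] = B_k(X)$ is convex.

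For the converse, assume $B_k(X)$ is convex for every convex $X$ and every $k\in\mathbb{Z}^+$. Taking $k=1$ gives $B_1(X) = N[X]$ convex for every convex $X$. The ``if'' direction of Theorem~\ref{thm:convex-neighbourhood} then shows that $G$ is bridged.

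Neither direction has a real obstacle. The only points needing care are verifying $B_k(X)=N[B_{k-1}(X)]$, which is the routine distance argument above, and the degenerate case $X=\emptyset$, where $B_k(\emptyset)=\emptyset$ is trivially convex.
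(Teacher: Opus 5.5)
Your proof is correct. There is no in-paper argument to compare it with, because the paper does not prove this statement: it quotes it from the same sources as Theorem~\ref{thm:convex-neighbourhood}.

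Your derivation is the standard link between the two characterizations. The identity $B_k(X)=N[B_{k-1}(X)]$ with induction on $k$ gives the forward direction. The case $k=1$, where $B_1(X)=N[X]$, combined with the ``if'' part of Theorem~\ref{thm:convex-neighbourhood} gives the converse. This shows that the paper's two cited theorems are formally equivalent, so only one of them actually needs to be imported.

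Be aware that your proof is only as self-contained as Theorem~\ref{thm:convex-neighbourhood}. All the geometric content lives in that cited result, and your argument does not touch it:
\begin{itemize}
\item in a bridged graph, the closed neighbourhood of a convex set is convex;
\item conversely, if all such neighbourhoods are convex, there are no isometric cycles of length at least $4$.
\end{itemize}
That is acceptable here, since the paper treats Theorem~\ref{thm:convex-neighbourhood} as a known result.
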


Weakly modular graphs in which all balls $B_k(v)$ are convex for all $v \in V$ and $k \in \mathbb{Z}^+$, which are known as \emph{weakly bridged} graphs \cite{osa13, CO15}, generalize bridged graphs. Weakly bridged graphs are exactly the $C_4$-free weakly modular graphs \cite{CO15}.

Let $v$ be a vertex of a graph $G$. We say that $G$ satisfies the property $SD_n(v)$ (simple descent on balls of radii at most $n$ around $v$) \cite{osa13} if for every $i = 1,2,..,n$ the following condition holds: for every $A \subseteq S_{i+1}(v)$ such that $A$ is a clique, there exists a vertex $w$ such that $d(w, v) = i$ and $w$ is adjacent to all vertices in $A$, i.e., $A \cup \{w\}$ is also a clique. We shall say that the property SD$(v)$ is satisfied if SD$_n(v)$ is satisfied for all natural numbers $n \in \mathbb{Z}^+$. Analogously, we shall say that $G$ satisfies the $k$-SD$_n(v)$ (respectively $k$-SD) property if it satisfies $SD_n(v)$ (respectively SD$(v)$) for all cliques $A$ with at most $k$ vertices. Note that 2-SD is the same as the triangle condition TC.

It was shown in \cite{osa13} that weakly bridged graphs satisfy the property SD$_n(v)$ for every natural number $n$ and every vertex $v \in V$.

\subsection{Matroid basis graphs}
According to one of the many equivalent definitions, a \emph{matroid} on a finite set of elements $I$ is a collection $\mathcal{B}$ of subsets of $I$ , called \emph{bases}, which satisfy the following \emph{exchange property}:
\\
(EP) for all $A, B \in \mathcal{B}$ and $i \in A \setminus B$ there exists $j \in B \setminus A$ such that $(A \setminus \{i\}) \cup \{j\}) \in \mathcal{B}$.

We say that the base $(A \setminus \{i\}) \cup \{j\}) $ is obtained from the base $A$ by an \emph{elementary exchange}. It
is well known that all the bases of a matroid have the same cardinality, which is called its \emph{rank}.

The \emph{basis graph} $G = G(\mathcal{B})$ of a matroid $\mathcal{B}$ is the graph whose vertices are the bases of $\mathcal{B}$ and edges are the pairs $A, B$ of bases differing by a single exchange. In other words, $A$ and $B$ are adjacent if and only if $|A \setminus B| = |B \setminus A| = 1$.

In \cite{MBGI}, Maurer characterized the basis graphs of matroids as connected graphs satisfying the following conditions. He also conjectured that the link condition is redundant. This was confirmed by Chalopin, Chepoi and Osajda in \cite{CCO15}  

\begin{enumerate}
    \item \emph{Interval condition} (IC): For any two vertices $u, v \in V$ such that $d(u, v) = 2$, the subgraph induced by $I(u, v)$ is a square, a pyramid or an octahedron (see Figure \ref{fig:matroid}). In particular, $I(u, v)$ always contains a square.
    \item \emph{Positioning condition} (PC): For any square $v_1v_2v_3v_4$ and any vertex $u \in V$, we have $d(u, v_1) + d(u, v_3) = d(u, v_2) + d(u, v_4)$.
    \item \emph{Link condition}: $N(v)$ is the line graph of a bipartite graph for any $v \in V$.
\end{enumerate}
\begin{figure}[!h]
    \centering
    \includegraphics[width=0.6\linewidth]{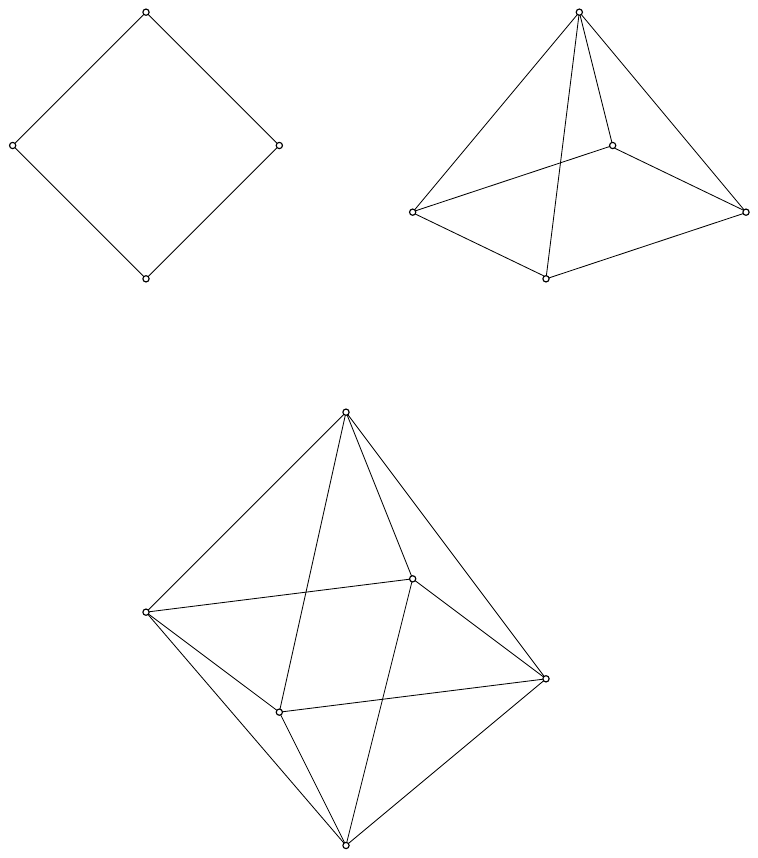}
    \caption{A square, a pyramid and an octahedron}
    \label{fig:matroid}
\end{figure}

\begin{lem}[\cite{Ch_mat}]
Matroid basis graphs are meshed.
\end{lem}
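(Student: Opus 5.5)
The plan is to verify the weak quadrangle condition (QC$^-$) directly from Maurer's characterization, using only the interval condition (IC) and the positioning condition (PC). The link condition is not needed.

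Fix $u, v, w \in V$ with $d(v,w) = 2$. First I will extract an induced square having $v$ and $w$ as opposite corners. By (IC), the subgraph induced by $I(v,w)$ is a square, a pyramid or an octahedron. Since $v$ and $w$ are non-adjacent vertices of $I(v,w)$ at distance $2$, I will check, case by case, that they form a non-adjacent pair of the figure and that there are two common neighbours $x, y$ of $v$ and $w$ with $x \not\sim y$.
\begin{itemize}
\item For a square, $v$ and $w$ are opposite corners, and $x, y$ are the other two corners.
\item For a pyramid, the apex is adjacent to every other vertex, so neither $v$ nor $w$ is the apex. Hence $v$ and $w$ are opposite corners of the base square, and $x, y$ are the other two base corners.
\item For an octahedron, $v$ and $w$ form one of the three antipodal (non-adjacent) pairs. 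The remaining four vertices induce a $4$-cycle, and $x, y$ are taken as one antipodal pair among them.
\end{itemize}
In every case $v\,x\,w\,y$ is a square, that is, an induced $4$-cycle in $G$.

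Next I apply (PC) to the square $v_1v_2v_3v_4 = v\,x\,w\,y$ and the vertex $u$. This gives
\[
d(u,v) + d(u,w) = d(u,x) + d(u,y).
\]
Choose $z \in \{x,y\}$ with $d(u,z) = \min\{d(u,x), d(u,y)\}$. Then $2d(u,z) \le d(u,x) + d(u,y) = d(u,v) + d(u,w)$. Since $z$ is a common neighbour of $v$ and $w$, this is exactly (QC$^-$), so $G$ is meshed.

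The only step requiring care is the case analysis in (IC), namely confirming that in the pyramid and octahedron cases $v$ and $w$ occupy a non-adjacent position whose two chosen common neighbours are themselves non-adjacent. This is necessary so that $v\,x\,w\,y$ is a genuine square to which (PC) applies. Everything else is a one-line averaging argument.
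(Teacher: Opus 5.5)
Your proof is correct and follows the same route as the paper: (IC) gives a square $v\,x\,w\,y$ inside $I(v,w)$, (PC) gives $d(u,x)+d(u,y)=d(u,v)+d(u,w)$, and you take the closer of $x,y$. Your case check that $v,w$ are opposite corners of an induced square is a step the paper leaves implicit. Also, by not splitting off the case $d(u,v)+d(u,w)=2$, you avoid the paper's slightly inaccurate claim there: that claim overlooks $u\in\{v,w\}$, although the square argument covers that case anyway.
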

\begin{proof}
Let $u, v, w \in V$ with $d(v, w) = 2$ and let $k := d(u, v) + d(u, w)$. We need to show that there exists $x \sim v, w$ such that $2d(u, x) \leq k$.  If $k = 2$, then $d(u, v) = d(u, w) = 1$ and $u$ itself satisfies our requirements. Suppose $k \geq 3$. By the interval condition IC, $I(v, w)$ contains a square, say $vxwy$. By the positioning condition PC, we have $d(u, x) + d(u, y) = d(u, v) + d(u, w) = k$. Suppose $d(u, x) \leq d(u, y)$. Then we have $2d(u, x) \leq d(u, x) + d(u, y) = k$ and we are done.
\end{proof}

\section{Geodesic halfspace separation}
The \emph{three step method} introduced in \cite{Che24} gives an algorithm for monophonic halfspace separation by first reducing the general halfspace separation problem to the special case of \emph{shadow-closed osculating} pairs of convex sets. They then construct a {\sc 2-SAT} formula to solve monophonic halfspace separation for this special case, the details of which are corrected in the paper \cite{BCET25}. We adapt this method to geodesic halfspace separation in weakly bridged, pseudo-modular and matroid basis graphs. We construct a {\sc 2-SAT} formula for geodesic halfspace separation and prove the following theorem, which is the detailed version of \ref{thm:main}.
\begin{thm}
\label{thm:2-SAT-sep}
Let $(A, B)$ be a pair of shadow-closed osculating convex sets of a graph $G$ and $V^+ \defeq V \setminus (A \cup B)$. If $G$ is weakly bridged, pseudo-modular or the basis graph of a matroid, then the following are equivalent for any partition $(A^+, B^+)$ of $V^+$:
\begin{enumerate}
\item $H = A \cup A^+$, $H^c = B \cup B^+$ are complementary halfspaces.
\item $A^+ = \{x \in V^+ \mid \alpha(a_x) = 1\}$ and $B^+ = \{x \in V^+ \mid \alpha(a_x) = 0\}$ for a solution $\alpha$ of the {\sc 2-SAT} formula $\Phi$ defined in \ref{sec:2-SAT-red}.
\end{enumerate}
\end{thm}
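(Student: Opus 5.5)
Since $G$ is meshed in all three cases (weakly modular graphs, hence weakly bridged and pseudo-modular ones, are meshed, and matroid basis graphs are meshed by the lemma of \cite{Ch_mat}), I will work throughout with Lemma~\ref{lem:convex-iff-locallyconvex}. A connected induced subgraph is convex iff it is locally convex, so convexity only has to be checked on pairs at distance $2$. The plan is to make the 2-SAT formula $\Phi$ encode exactly these local constraints. For each $x\in V^+$ there is a variable $a_x$, with $a_x=1$ meaning $x\in H$. The clauses are of three kinds. First, for each $x\in V^+$ and each $a\in A$, $b\in B$ with $x\in I(a,b)$ and $d(a,b)=2$, we get unit-type implications forced by $A$ and $B$. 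Second, for $x,y\in V^+$ with $d(x,y)=2$ and $z\in I(x,y)$, we add $(a_x\wedge a_y)\Rightarrow a_z$ and $(\neg a_x\wedge\neg a_y)\Rightarrow\neg a_z$; these are 3-clauses, so they must be shown reducible to 2-clauses using the structure. Third, mixed pairs with one endpoint in $A\cup B$ give genuine 2-clauses: for $a\in A$, $y\in V^+$, $z\in I(a,y)$ with $d(a,y)=2$, we add $a_y\Rightarrow a_z$ (with $a_z$ a constant if $z\in A\cup B$), and symmetrically for $B$.

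\textbf{(1)$\Rightarrow$(2).} If $H,H^c$ are complementary halfspaces, setting $\alpha(a_x)=[x\in H]$ satisfies every clause. Each clause is an instance of convexity of $H$ or of $H^c$ for a pair at distance $2$, or a consequence of such instances.

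\textbf{(2)$\Rightarrow$(1).} This is the main work. Given a satisfying $\alpha$, I must show that $H$ and $H^c$ are convex. By Lemma~\ref{lem:convex-iff-locallyconvex} it suffices to show that both are connected and locally convex. Connectivity would be argued via the osculating hypothesis: $A$ and $B$ are at distance $1$, and shadow-closedness should imply that every $x\in V^+$ assigned to $A^+$ has a neighbour path back into $A$ inside $H$. One route is to use the clauses $a_y\Rightarrow a_z$ along a geodesic from $y$ to $A$, descending by distance to $A$.

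Local convexity for pairs $x,y\in H$ at distance $2$ splits into cases by how many of $x,y$ lie in $A$. If both lie in $A$, convexity of $A$ handles it. If exactly one lies in $A$, the mixed clauses handle it. If both lie in $A^+$, I need the 3-clause constraint. This is the key obstacle: showing that whenever $x,y\in V^+$, $d(x,y)=2$, $z\in I(x,y)$ and $a_x=a_y=1$, then $a_z=1$ is already forced by 2-clauses.

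My plan for that step is to use the graph-class-specific descent properties to find a vertex $w$ of $A$ (or $B$) whose position relative to $x,y,z$ forces a 2-clause linking them. For pseudo-modular graphs, take $u\in A$ nearest to $x,y$, with $d(u,x)=d(u,y)=k$. Pseudo-modularity then gives a common neighbour of $x,y$ at distance $k-1$ from $u$, which lets me induct on $k$ to relate $z$ to a vertex closer to $A$. For weakly bridged graphs, use convexity of balls together with SD$_n$ and $C_4$-freeness. In a $C_4$-free graph, $z$ and any other common neighbour of $x,y$ are adjacent, and convexity of balls around $B$-vertices forces $z$ to lie on the correct side. For matroid basis graphs, use IC and PC: $I(x,y)$ is a square, pyramid or octahedron, and PC applied with $u\in A\cup B$ balances distances so that $z$ and its antipode in the square cannot both lie in $H^c$ while $x,y\in H$.

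The shadow-closed hypothesis is what prevents a vertex of $B$ from lying in the shadow of $A^+$. I expect this case analysis, and particularly the matroid case, to be the hardest part.
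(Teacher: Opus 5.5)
There is a genuine gap: your proposal does not prove the statement. The theorem concerns the specific formula $\Phi=\Phi_1\wedge\Phi_2\wedge\Phi_3$ of Section~\ref{sec:2-SAT-red}, but you replace it with your own formula, whose core is the 3-clauses $(a_x\wedge a_y)\Rightarrow a_z$ for $x,y\in V^+$ at distance $2$. With those clauses, local convexity of $H$ and $H^c$ is built in, so (2)$\Rightarrow$(1) is nearly a tautology. But the formula is not 2-SAT. The entire content of the theorem is that a genuine 2-clause formula already forces these constraints. That is exactly the step you call ``the key obstacle'', and you leave it as a list of intentions: induction on $k$, ``convexity of balls around $B$-vertices'', ``PC balances distances''.

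The 2-clauses you do keep are also too local to close the gap, since they only constrain distance-$2$ pairs with one end in $A\cup B$. For instance, your connectivity argument descends along a geodesic from $y$ to $A$ using clauses $a_y\Rightarrow a_z$, but no such clause exists once $d(y,A)>2$.

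The missing ingredients are the long-range clauses of the paper's $\Phi$:
\begin{itemize}
\item $\Phi_2$ encodes $x\to_A y$ whenever $y\in I(x,z)$ for some $z\in A$ at any distance, and symmetrically for $B$.
\item $\Phi_3$ applies to every pair in $V^+$ whose interval meets $A$ or $B$.
\item $\Phi_1$ forces equality across $I(x,y)$ whenever $S_x\cap S_y\neq\emptyset$. Here $S_x$ is the set of common neighbours of an $A$--$B$ edge that lie on geodesics from $x$ to both of its ends.
\end{itemize}
These clauses give connectivity (Proposition~\ref{prop:H-is-connected}). Together with Proposition~\ref{prop:equidistant} they also give the reduction of Lemma~\ref{lem:distances-equal-in-countereg}. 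Any violation of local convexity must have the form $x,y\in A^+$, $z\in B^+$, with $d(x,y)=2$ and $z\in I(x,y)$. All three vertices lie at a common distance $k\ge 2$ from both ends of every edge $ab$, and $S_x\cap S_y=S_x\cap S_z=\emptyset$.

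After this reduction, the class-specific step is a short trichotomy on a descent vertex. For pseudo-modular graphs, take a common neighbour $u$ of $x,y$ with $d(u,b)=k-1$. It cannot lie in $V^+$, since then $S_u^{ab}\subseteq S_x\cap S_y$. It cannot lie in $A$, since then $x\in A/B$. So $u\in B\cap I(x,y)$, which violates $\Phi_3$. For weakly bridged graphs, TC gives a common neighbour $u\in A$ of $x$ and $z$, which is forced to be adjacent to $y$; applying 3-SD to the triangle $xzu$ toward $b$ then gives the contradiction. Matroid basis graphs use the IC/PC case analysis. This reduction, and the role of the sets $S_x$ in it, are the ideas your proposal lacks.
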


\subsection{The three step method}
Let $G = (V, E)$ be a simple connected graph and $A, B \subseteq V$. We shall first find the vertices that can be easily seen to lie on the same side as $A$ (or $B$) in any pair of complementary halfspaces.

Let $P = u_1,..,u_k$ be any shortest path from $A$ to $B$. If $H, H^c$ are complementary halfspaces such that $A \subseteq H$ and $B \subseteq H^c$, then we must have an edge $u_iu_{i+1}$ in $P$ with $u_i \in H$ and $u_{i+1} \in H^c$. Thus, for any halfspace $H$,  $A \subseteq H$ and $B \subseteq H^c$ if and only if $A \cup \{u_i\} \subseteq H$ and $B \cup \{u_{i+1}\} \subseteq H^c$ for some $i \in \{1,..,k-1\}$. A pair of sets $(A, B)$ is said to be osculating if $A$ and $B$ are disjoint and adjacent. The first step of our algorithm is to use the above argument to produce instances $(A_i, B_i)$ of osculating pairs.

The \emph{shadow of $A$ with respect to $B$} \cite{Ch_sep} is the set $$A/B \defeq \{x \in V \mid \mathfrak{c}(B \cup \{x\}) \cap A \neq \emptyset \}$$. 

Observe that $A \subseteq A/B$ and $B \subseteq B/A$. Thus, if $H$ is a halfspace that separates $A/B$ and $B/A$, we have $A \subseteq H$ and $B \subseteq H^c$. Conversely, let $H$ be a halfspace with $A \subseteq H$ and $B \subseteq H^c$. For any vertex $v$, we have $v \in H \implies \mathfrak{c}(A \cup \{v\}) \subseteq H$ and $v \in H^c \implies \mathfrak{c}(B \cup \{v\}) \subseteq H^c$. We conclude that a halfspace $H$ separates $A$ and $B$ if and only if it separates $A/B$ and $B/A$. Thus, we have the following lemma.
\begin{lem}[\cite{Che24}]
\label{lem:shadow-closure}
Let $A, B \subseteq V$ and $H$ be a halfspace in $G$. $A \subseteq H$ and $B \subseteq H^c$ if and only if $\mathfrak{c}(A/B) \subseteq H$ and $\mathfrak{c}(B/A)\subseteq H^c$.
\end{lem}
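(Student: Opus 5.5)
The proof of Lemma~\ref{lem:shadow-closure} is two inclusions, each reduced to the definitions of shadow and convex hull. One step needs care: the shadow is defined relative to the hull $\mathfrak{c}(B \cup \{x\})$, not relative to $B$ alone.

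\emph{Backward direction.} Suppose $\mathfrak{c}(A/B) \subseteq H$ and $\mathfrak{c}(B/A) \subseteq H^c$. If $a \in A$, then $a \in \mathfrak{c}(B \cup \{a\}) \cap A$, so $a \in A/B$. Hence $A \subseteq A/B \subseteq \mathfrak{c}(A/B) \subseteq H$. Symmetrically $B \subseteq B/A \subseteq \mathfrak{c}(B/A) \subseteq H^c$.

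\emph{Forward direction.} Suppose $A \subseteq H$ and $B \subseteq H^c$, where both $H$ and $H^c$ are convex. Let $x \in A/B$ and suppose, for contradiction, that $x \notin H$, i.e.\ $x \in H^c$. Then $B \cup \{x\} \subseteq H^c$. Since $H^c$ is convex and the hull is the smallest convex set containing a set, $\mathfrak{c}(B \cup \{x\}) \subseteq H^c$. But $x \in A/B$ means $\mathfrak{c}(B \cup \{x\})$ meets $A$, and $A \subseteq H$. So $H \cap H^c \neq \emptyset$, a contradiction. Hence $A/B \subseteq H$, and because $H$ is convex, minimality of the hull gives $\mathfrak{c}(A/B) \subseteq H$. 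The argument with the roles of $A$ and $B$ (and of $H$ and $H^c$) swapped gives $\mathfrak{c}(B/A) \subseteq H^c$.

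The only point needing attention is the forward direction. The correct implication to use is $x \in H^c \Rightarrow \mathfrak{c}(B \cup \{x\}) \subseteq H^c$, which relies on convexity of $H^c$. This is exactly where the hypothesis that $H$ is a halfspace, rather than merely a convex set, is used. The rest is immediate from the definitions.
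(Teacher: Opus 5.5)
Your proposal is correct and follows the paper's own argument. The backward direction uses $A \subseteq A/B$ and $B \subseteq B/A$. The forward direction uses the implications $v \in H^c \Rightarrow \mathfrak{c}(B\cup\{v\}) \subseteq H^c$ and $v \in H \Rightarrow \mathfrak{c}(A\cup\{v\}) \subseteq H$, then minimality of the hull. You spell out the contradiction and the final hull step more explicitly than the paper does.
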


Thus, $H$ separates $A$ and $B$ if and only if it separates $\mathfrak{c}(A/B)$ and $\mathfrak{c}(B/A)$. Now, we iterate this argument. Let $A^0 = A$ and $B^0 = B$. For $i \geq 1$, we define $A^i \defeq \mathfrak{c}(A^{i-1}/B^{i-1})$ and $B^i \defeq \mathfrak{c}(B^{i-1}/A^{i-1})$. Since $A^i \subseteq \mathfrak{c}(A^i/B^i) = A^{i+1}$ and $B^i \subseteq \mathfrak{c}(B^i/A^i) = B^{i+1}$ for all $i$, the sequences $(A^i)_{i \geq 0}$ and $(B^i)_{i \geq 0}$ must converge to some sets $A^*$ and $B^*$ with $A^* = A^*/B^*$ and $B^* = B^*/A^*$. We conclude that a halfspace $H$ separates $A$ and $B$ if and only if it separates $A^*$ and $B^*$. The pair $(A^*, B^*)$ is called the \emph{shadow-closure} of $(A, B)$, and we say that a pair of sets $(A, B)$ is \emph{shadow-closed} whenever $A = A/B$ and $B = B/A$.
\begin{algorithm}
\caption{ \sc ShadowClosure}
\begin{algorithmic}
\State \textbf{Input: } Graph $G$, disjoint sets $A, B \subseteq V$
\State \textbf{Output: } The shadow-closure of $(A, B)$
\While{ $A \neq A/B$ or $B \neq B/A$}
    \State $A \gets A/B, B \gets B/A$
\EndWhile
\State \Return $(A, B)$
\end{algorithmic}
\end{algorithm}

\begin{figure}[ht]
    \centering
    \includegraphics[width=0.3\linewidth]{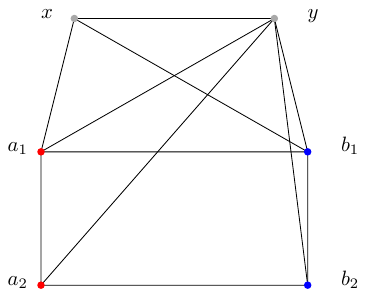}
    \caption{A shadow-closed osculating pair. The grey vertices $x$ and $y$ are the residue $V^+$}
    \label{fig:sc-osc}
\end{figure}
Combining the above two ideas, the {\sc HalfspaceSeparation} problem for a graph $G$ and sets $A, B \subseteq V$ can be reduced to the case where $A$ and $B$ are a shadow-closed osculating pair of convex sets. In the next subsection, we construct a {\sc 2-SAT} formula $\Phi$ that solves geodesic halfspace separation problem for shadow-closed osculating pairs of convex sets when $G$ is weakly bridged, pseudo-modular or the basis graph of a matroid. 

Our algorithm is as follows.
\begin{algorithm}
\caption{\sc HalfspaceSep}
\begin{algorithmic}
\State \textbf{Input:} Weakly bridged, pseudo-modular or matroid basis graph $G$, disjoint sets $A, B \subseteq V$
\State \textbf{Question:} Is there a halfspace $H$ such that $A \subseteq H$ and $B \subseteq H^c$?

\State Compute a shortest $A-B$ path $P = u_1..u_k$.
\For{$i = 1,..,k-1$}
    \State $A_i, B_i = ${\sc ShadowClosure}$(A \cup \{u_i\}, B \cup \{u_{i+1}\})$
    \State Compute the {\sc 2-SAT} formula $\Phi$ defined in \ref{sec:2-SAT-red} corresponding to $(A_i, B_i)$.
    \If{$\Phi$ has a satisfying assignment}
        \State \Return YES
    \EndIf
\EndFor
\\
\Return NO
\end{algorithmic}
\end{algorithm}
As {\sc 2-SAT} can be solved in linear time, this implies that {\sc HalfspaceSeparation} is polynomial for these classes of graphs.

\subsection{Reduction to {\sc 2-SAT}}
\label{sec:2-SAT-red}
Let $A, B \subseteq V$ be a shadow-closed osculating pair of convex sets.
 We define the \emph{residue} as $V^+ \defeq V \setminus  (A \cup B)$. Note that if $V^+ = \emptyset$, then we can answer yes to {\sc HalfspaceSeparation}$(A, B)$. Thus, we may assume that $V^+ \neq \emptyset$ and our goal is to distribute the vertices in the residue into complementary halfspaces. \cite{BCET25} gives a polynomial algorithm for monophonic {\sc HalfspaceSeparation} using a similar approach, and reduce the problem to {\sc 2-SAT} by associating a boolean variable with each vertex in the residue. In this section, we shall construct a 2-SAT formula to solve the geodesic halfspace separation problem in weakly bridged graphs, pseudo-modular graphs, and matroid basis graphs.

 First, we shall prove some properties of shadow-closed osculating pairs.

\begin{prop}
\label{prop:equidistant}
   Let $x \in V^+$ and $ab \in E$ with $a \in  A$ and $b \in B$. Then $x$ is equidistant from $a$ and $b$.  
\end{prop}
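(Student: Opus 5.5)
Since $a\sim b$, the triangle inequality gives $|d(x,a)-d(x,b)|\le 1$, so we only need to exclude $d(x,a)\neq d(x,b)$. We argue by contradiction using shadow-closedness. Suppose first that $d(x,b)=d(x,a)+1$. Then $a$ lies on a shortest path from $x$ to $b$: concatenate a shortest $x$--$a$ path with the edge $ab$ to get a path of length $d(x,a)+1=d(x,b)$. Hence $a\in I(x,b)\subseteq \mathfrak{c}(B\cup\{x\})$. It follows that $\mathfrak{c}(B\cup\{x\})\cap A\neq\emptyset$, that is, $x\in A/B$ by the definition of the shadow. Since the pair is shadow-closed, $A/B=A$, so $x\in A$, contradicting $x\in V^+=V\setminus(A\cup B)$.

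Symmetrically, suppose $d(x,a)=d(x,b)+1$. Then $b\in I(x,a)\subseteq\mathfrak{c}(A\cup\{x\})$, so $\mathfrak{c}(A\cup\{x\})\cap B\neq\emptyset$. This gives $x\in B/A=B$, again a contradiction. Therefore $d(x,a)=d(x,b)$.

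The proof has no real obstacle. The only points to check are that the concatenated path really is a geodesic, which holds because its length equals $d(x,b)$, and that the shadow is taken with respect to the correct set in each case, namely $A/B$ when $a$ lies between $x$ and $b$, and $B/A$ when $b$ lies between $x$ and $a$. Convexity of $A$ and $B$ and the graph class are not needed here; only shadow-closedness and the fact that $x$ lies in the residue are used.
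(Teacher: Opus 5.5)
Your proof is correct and uses the same argument as the paper: the edge $ab$ gives $|d(x,a)-d(x,b)|\le 1$, and each off-by-one case is ruled out by shadow-closedness together with $x\in V^+$. You also attach the correct shadow to each case ($b\in I(x,a)$ gives $x\in B/A$). The paper's single WLOG case writes $x\in A/B$ at that step, which is a harmless slip since both $A=A/B$ and $B=B/A$ hold.
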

\begin{proof}
Without loss of generality let $k = d(x, a) \geq d(x, b)$. Then $d(x, b)$ must be $k-1$ or $k$. Suppose it is $k-1$. Then $b \in I(a, x)$. Thus $x \in A/B$ which contradicts the fact that $A$ is shadow-closed. Thus we must have $d(x, b) = k = d(x, a)$.
\end{proof}

For $x \in V^+$ and $ab \in E$ with $a \in A$ and $b \in B$, we define the set $$S_x^{ab} \defeq \set{x_0 \in V^+}{x_0 \sim a, x_0 \sim b,  x_0 \in I(x, a) \cap I(x, b)  }.$$ We define $ S_x \defeq \bigcup_{ab \in E, a \in A, b \in B}{S_x^{ab}}$.

\begin{prop}
\label{prop:Sx-non-empty}
    For any $x \in V^+$ and any edge $ab \in E$ with $a \in A$, $b \in B$, $S_x^{ab}$ is non-empty if $G$ satisfies the triangle condition TC.
\end{prop}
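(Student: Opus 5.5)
Let $k \defeq d(x,a)$, which by Proposition~\ref{prop:equidistant} also equals $d(x,b)$. Since $x \in V^+$ and $a,b$ are not in $V^+$, we have $k \geq 1$. The plan is to apply the triangle condition TC$(x)$ to the adjacent pair $a,b$, which are equidistant from $x$, to get a common neighbour one step closer to $x$. We then check that this neighbour lies in the residue.

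\textbf{Case $k \geq 2$.} TC$(x)$ applies to $v=a$, $w=b$ and yields a vertex $x_0$ with $x_0 \sim a$, $x_0 \sim b$ and $d(x,x_0)=k-1$. Then $d(x,x_0)+d(x_0,a)=k=d(x,a)$, so $x_0 \in I(x,a)$, and likewise $x_0 \in I(x,b)$.

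\textbf{Case $k=1$.} Here $x$ itself is adjacent to both $a$ and $b$, and it trivially lies in $I(x,a)\cap I(x,b)$. So we take $x_0 = x$, which is in $V^+$ by assumption.

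The remaining step, and the only one needing care, is to show $x_0 \in V^+$, i.e.\ $x_0 \notin A \cup B$. We use that $(A,B)$ is shadow-closed. Suppose $x_0 \in A$. Since $x_0 \in I(x,b)$, we get $x_0 \in \mathfrak{c}(B \cup \{x\}) \cap A$, hence $x \in A/B = A$. This contradicts $x \in V^+$. Symmetrically, if $x_0 \in B$, then $x_0 \in I(x,a)$ gives $x \in B/A = B$, again a contradiction. Therefore $x_0 \in V^+$, so $x_0 \in S_x^{ab}$ and $S_x^{ab}$ is non-empty.

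I do not expect any real obstacle here. The key points are to use equidistance (Proposition~\ref{prop:equidistant}) so that TC applies, and to use shadow-closedness to keep $x_0$ in the residue.
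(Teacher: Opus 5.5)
Your proof is correct and follows the paper's argument: you use equidistance (Proposition~\ref{prop:equidistant}) to apply TC$(x)$ to the edge $ab$, take $x_0 = x$ when $k=1$, and then check that the common neighbour lies in $V^+$. The only cosmetic difference is in excluding $x_0 \in A$: the paper applies Proposition~\ref{prop:equidistant} to the $A$--$B$ edge $x_0 b$, which would force $d(x,x_0)=d(x,b)$, whereas you invoke shadow-closedness directly via $x_0 \in I(x,b) \cap A$; this is the same idea, since Proposition~\ref{prop:equidistant} is itself proved that way.
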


\begin{proof}
    Let $x \in V^+$ and $ab$ be any edge with $a \in A$ and $b \in B$. By Proposition \ref{prop:equidistant}, $d(x, a) = d(x, b)$. Let this distance be $k$. If $k = 1$, then $x \in S_x^{ab}$ and we are done. Suppose $k > 1$. By the triangle condition, there exists $w \in V$ such that $w \sim a, w \sim b$ and $d(x, w) = k-1$. If $w \in V^+$, we are done. Suppose not. Without loss of generality, assume $w \in A$. Then by Proposition \ref{prop:equidistant}, $d(x, b) = d(x, w) = k-1$, which is a contradiction. Thus, $w \in V^+$, which implies $w \in S_x^{ab}$.
\end{proof}
In particular, since $A$ and $B$ osculate, we have $S_x \neq \emptyset$ for any $x \in V^+$ in any graph satisfying TC.

To distribute the residue $V^+$ into complementary halfspaces $H$ and $H^c$, we define a 2-SAT formula $\Phi$, which shall be a necessary condition for {\sc HalfspaceSeparation}.
For each vertex $x \in V^+$, we define a binary variable $a_x$.

For any $x \in V^+$ and $x_0 \in S_x$, we have $x_0 \in \mathfrak{c}(x \cup A) \cap \mathfrak{c}(x \cup B)$. Thus, if $H, H^c$ are complementary halfspaces separating $A$ and $B$, we must have $x \in H$ if and only if $x_0 \in H$. Consequently, if $x, y \in V^+$ and $S_x \cap S_y \neq \emptyset$, we want $x$ and $y$ to be in the same halfspace. Thus, any $z \in I(x, y)$ must also be in the same halfspace as $x$ and $y$.

We thus want the set of variables $\{a_x\}_{x \in V^+} $ to satisfy the following constraints, called the \emph{equality constraints}:  for all $x, y, z \in V^+$ such that $S_x \cap S_y \neq \emptyset$ and $z \in I(x, y)$, we have $a_z  = a_{x_0}$ for any $x_0 \in S_x \cap S_y$.

These equality constraints can be encoded by the following {\sc 2-SAT} formula:
\[
\Phi_1  = 
\bigwedge_{\substack{x, y, z \in V^+\\ x_0 \in S_x \cap S_y  \\ z \in I(x, y)}}(a_{x_0} \lor \overline{a_{z}}) \land (\overline{a_{x_0}} \lor a_{z})
\]

Secondly, we want the vertices $x \in V^+$ and their binary variables $a_x$ to satisfy the following conditions, called the \emph{implication constraints}:
\begin{enumerate}
    \item If $x \in V^+,\ z \in A$ and $y \in I(x, z)$ then $x \in H \implies y \in H$,
    \item If $x \in V^+,\ z \in B$ and $y \in I(x, z)$, then $x \in H^c \implies y \in H^c$,
\end{enumerate}
where $H, H^c$ are complementary halfspaces such that $A \subseteq H$ and $B \subseteq H^c$.
For vertices $x, y \in V^+$, we shall say that $x$ $A$-implies $y$, written $x \to_A y$, if there exists a vertex $z \in A$ such that $y \in I(x, z)$. Analogously, we say that \emph{x $B$-implies y}, denoted by $x \to_B y$, if there exists $z \in B$ such that $y \in I(x, z)$.
These implication constraints can be encoded by the following {\sc 2-SAT} formula:
\[
\Phi_2 = \bigwedge_{\substack{x, y \in V^+\\ x \rightarrow_A y}} (\overline{a_x} \lor a_y)~~\land \bigwedge_{\substack{x, y \in V^+\\ x {\rightarrow_B} y}} (a_x \lor \overline{a_y}).
\]
\begin{rem}
    For any $x \in V^+$ and $x_0 \in S_x$, we have $x \rightarrow_A x_0$ and $x \rightarrow_B x_0$. Thus, the implication constraints ensure that $x$ and $x_0$ will be in the same halfspace.
\end{rem}
Finally, we want the vertices to satisfy the following constraints:
\begin{enumerate}
    \item If $x, y \in V^+$ and $I(x, y) \cap B \neq \emptyset$, then at least one of $x, y$ must be in $H^c$,
    \item If $x, y \in V^+$ and $I(x, y) \cap A \neq \emptyset$, then at least one of $x, y$ must be in $H$.
\end{enumerate}
These can be encoded by the following {\sc{2-SAT}} formula:
\[
\Phi_3 = \bigwedge_{\substack{x, y \in V^+\\ I(x,y)\cap B \neq \emptyset}} (\overline{a_x} \lor \overline{a_y})~~\land \bigwedge_{\substack{x, y \in V^+\\ I(x, y) \cap A \neq \emptyset}}(a_x \lor a_y).
\]
Let $\Phi \defeq \Phi_1 \land \Phi_2 \land \Phi_3$. For any complementary halfspaces $(H, H^c)$, consider the assignment $\alpha$ that assigns $a_x = 1$ for all $x \in V^+ \cap H$ and $a_x = 0$ for all $x \in V^+ \cap H^c$. 
Note that if $(H, H^c)$ are complementary halfspaces separating $A$ and $B$, then by construction $\alpha$ must be a satisfying assignment for~$\Phi$. In other words, the satisfiability of $\Phi$ is a necessary condition for halfspace separation for any graph. We shall show that this is  also a sufficient condition for halfspace separation if $G$ is weakly bridged, pseudo-modular or the basis graph of a matroid. 

As {\sc 2-SAT} can be solved in linear time, this proves that {\sc HalfspaceSeparation} is polynomial for these classes of graphs.

From now on, we assume that $G$ satisfies the triangle condition.

Let $\alpha$ be any satisfying assignment of $\Phi$. Let $A^+ \defeq \{x \in V^+ \mid \alpha(a_x) = 1\}$ and $B^+ \defeq \{x \in V^+ \mid \alpha(a_x) = 0\}$. Consider the sets $H \defeq A \cup A^+$ and $H^c = B \cup B^+$.
\begin{prop}
\label{prop:H-is-connected}
    The graphs induced by $H$ and $H^c$ are connected.
\end{prop}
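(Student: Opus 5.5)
We argue for $H = A \cup A^+$; the case of $H^c$ is symmetric, with the roles of $A$ and $B$ exchanged and the $B$-clauses of $\Phi_2$ used instead of the $A$-clauses. The plan is to show that every vertex of $H$ is joined to $A$ by a path inside $H$, and that $A$ itself induces a connected subgraph.

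\textbf{Step 1: $A$ is connected.} Since $A$ is convex, for any $a, a' \in A$ every shortest $a$--$a'$ path lies in $A$. Hence $G[A]$ is connected. Note that $A \neq \emptyset$ because $A$ and $B$ osculate.

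\textbf{Step 2: each $x \in A^+$ reaches $A$ inside $H$.} Fix an edge $ab$ with $a \in A$ and $b \in B$; it exists since the pair osculates. We proceed by induction on $k = d(x,a)$, which by Proposition \ref{prop:equidistant} also equals $d(x,b)$.
\begin{itemize}
\item If $k=1$, then $x \sim a$ and we are done.
\item If $k \geq 2$, we want a neighbour $x'$ of $x$ with $x' \in I(x,a)$ and either $x' \in A$ or $x' \in A^+$.
\end{itemize}
Any neighbour $x' \in I(x,a)$ works, because $x \to_A x'$ (take $z = a \in A$). If $x' \in V^+$, the clause $(\overline{a_x} \lor a_{x'})$ of $\Phi_2$ together with $\alpha(a_x)=1$ forces $\alpha(a_{x'})=1$, so $x' \in A^+$. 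Otherwise $x' \in A$, or $x' \in B$.

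\textbf{Ruling out $x' \in B$.} We must exclude the possibility that the chosen neighbour lies in $B$. To avoid this case analysis, pick $x_0 \in S_x^{ab}$, which is non-empty by Proposition \ref{prop:Sx-non-empty} since $G$ satisfies TC. Then $x_0 \in V^+$, and $x \to_A x_0$ with $x_0 \in I(x,a)$, so $x_0 \in A^+$ by the implication clause. Moreover $x_0 \sim a$.

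Now take any shortest path $P$ from $x$ to $x_0$. Every vertex $y$ on $P$ lies in $I(x,a)$, because
\[
d(x,y) + d(y,x_0) + d(x_0,a) = d(x,a),
\]
using $x_0 \in I(x,a)$. Hence $x \to_A y$ for all such $y$. We then check that $y \notin B$:
\begin{itemize}
\item If some $y \in B$ lay on $P$, then $y \in I(x,a)$ with $y \in B$, $a \in A$, $x \in V^+$.
\item Since $y$ also lies on the segment from $x_0$ toward $x$, we get $y \in I(x_0, x)$.
\item Alternatively, $b \in B$ and $x_0 \in I(x,b)$ give $y \in I(x,b)$. Combined with $y \in I(x,a)$, the vertex $x$ would then lie in the shadow of $B$ with respect to $A$ relative to $a$, that is $a \in \mathfrak{c}(B \cup \{x\})$ would need checking.
\end{itemize}
Simpler: if $y \in B$, then the pair $x, x_0$ is such that $I(x, x_0) \cap B \neq \emptyset$. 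The $\Phi_3$ clause $(\overline{a_x} \lor \overline{a_{x_0}})$ then contradicts $\alpha(a_x) = \alpha(a_{x_0}) = 1$. So every vertex of $P$ is in $A$ or in $A^+$, by the implication clauses.

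\textbf{Conclusion.} The path $P$ followed by the edge $x_0 a$ connects $x$ to $A$ within $H$. This argument avoids the induction of Step 2 entirely.

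The main obstacle is precisely guaranteeing that intermediate vertices on the path avoid $B$. The $\Phi_3$ clauses handle this. The remaining point to verify carefully is that $x_0 \in A^+$ follows from $\Phi_2$, which holds because $x_0 \in I(x,a)$ with $a \in A$.
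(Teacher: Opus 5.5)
Your argument is correct and is essentially the paper's: choose $x_0 \in S_x^{ab}$, observe that a shortest $x$--$x_0$ path lies in $I(x,a)$ (resp.\ $I(x,b)$) so the implication clauses keep it on $x$'s side, and finish with the edge $x_0a$ (resp.\ $x_0b$). The only difference is how you exclude $B$-vertices from that path: you use the $\Phi_3$ clause $(\overline{a_x}\lor\overline{a_{x_0}})$, while the paper uses shadow-closedness ($y\in I(x,a)\cap B$ would force $x\in B/A=B$); in a final write-up, drop the abandoned induction and the garbled shadow bullet.
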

\begin{proof}
    $A$ and $B$ are connected as they are convex. Consider any vertex $x \in V^+$ and edge $ab$ with $a \in A$ and $b \in B$. By Proposition \ref{prop:Sx-non-empty}, $S_x^{ab}$ is non-empty. Let $x_0 \in S_x^{ab}$. For any $x' \in I(x, x_0)$, we have $x' \in I(x, a) \cap I(x, b)$. Since $A$ and $B$ are shadow-closed, we must have $x' \in V^+$ and by the implication constraints, we have $x' \in A^+$ if and only if $x \in A^+$. Thus, we have $I(x, x_0)\subseteq A^+$ if $x \in A^+$ and $I(x, x_0) \subseteq B^+$ if $x_0 \in B^+$. Thus, $H$ and $H^c$ are connected. 
\end{proof}

Thus, by Lemma \ref{lem:convex-iff-locallyconvex}, if $G$ is meshed then $H$ and $H^c$ are complementary halfspaces if and only if they are locally convex. In particular, this holds for weakly bridged graphs, pseudo-modular graphs and matroid basis graphs.

In the rest of this section, we shall show that the sets $H, H^c$  given by any solution of the {\sc 2-SAT} $\Phi$ are locally convex in these classes of graphs. As {\sc 2-SAT} can be solved in linear time, this shows that {\sc HalfspaceSeparation} is polynomial for them.

We require the following useful lemma.
\begin{lem}
\label{lem:distances-equal-in-countereg}
    If $H$ is not locally convex, then there exist $x, y \in A^+$ and $z \in B^+$ such that $d(x, y) = 2$ and $z \in I(x, y)$. Moreover, for any edge $ab$ with $a \in A$ and $b \in B$, we must have $d(x, a) = d(y, a) = d(z, a)= d(x, b) = d(y, b) = d(z, b) \geq 2$.
\end{lem}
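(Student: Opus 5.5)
The plan is to take a witness to the failure of local convexity and use shadow-closedness together with the clauses of $\Phi$ to pin down which parts of the partition the three vertices lie in. Then the implication constraints force the distances to be equal, and $\Phi_3$ gives the lower bound.

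\textbf{Locating $x,y,z$.} Since $H$ is not locally convex, there are $x,y \in H$ with $d(x,y)=2$ and some $z \in I(x,y)\setminus H$, so $z \in B \cup B^+$. Note that $z$ is adjacent to both $x$ and $y$.
\begin{itemize}
\item If $x,y \in A$, then $z \in A$ because $A$ is convex, which is impossible.
\item Suppose $z \in B$ and at least one of $x,y$, say $y$, lies in $A^+$. If $x \in A$, then $z \in I(y,x) \subseteq \mathfrak{c}(A \cup \{y\})$. Hence $y \in B/A = B$, contradicting $y \in V^+$. If instead $x \in A^+$, then $I(x,y)\cap B \neq \emptyset$, and the clause $(\overline{a_x}\lor\overline{a_y})$ of $\Phi_3$ is violated. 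So $z \in B^+$.
\item Suppose $z \in B^+$ and $x \in A$, $y \in A^+$. Then $z \in I(y,x)$ with $x\in A$, so $y \to_A z$. The clause $(\overline{a_y}\lor a_z)$ of $\Phi_2$ then forces $a_z=1$, which is a contradiction.
\end{itemize}
Hence $x,y \in A^+$ and $z \in B^+$.

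\textbf{Equal distances.} Fix an edge $ab$ with $a\in A$ and $b\in B$. By Proposition~\ref{prop:equidistant}, each of $x,y,z$ is equidistant from $a$ and $b$. Write $k_x, k_y, k_z$ for these common values. Since $z \sim x$, we have $|k_x-k_z|\le 1$.
\begin{itemize}
\item If $k_z = k_x - 1$, then $z \in I(x,a)$, so $x \to_A z$. Then $a_x=1$ forces $a_z=1$, a contradiction.
\item If $k_z = k_x+1$, then $x \in I(z,b)$, so $z \to_B x$. The clause $(a_z\lor\overline{a_x})$ with $a_z=0$ forces $a_x=0$, again a contradiction.
\end{itemize}
Thus $k_x=k_z$. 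The same argument applied to $y$ gives $k_y=k_z$.

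\textbf{Lower bound.} It remains to exclude the common value $1$. If $d(x,b)=d(y,b)=1$, then, since $d(x,y)=2$, we have $b \in I(x,y)\cap B$. The $\Phi_3$ clause $(\overline{a_x}\lor\overline{a_y})$ is then violated, because both variables equal $1$. So the common distance is at least $2$. No step here looks like a real obstacle. The only delicate part is the case analysis in the first step, in particular remembering to use shadow-closedness of $B$ when $z\in B$ and one endpoint lies in $A$.
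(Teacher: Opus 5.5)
Your proof is correct and follows the paper's argument essentially step for step: a case analysis locating $x,y,z$, then the implication constraints of $\Phi_2$ to force equal distances, then $\Phi_3$ to exclude the value $1$. The only minor difference is in the case where one endpoint lies in $A$ and $z\in B$: you apply shadow-closedness of $B$ directly ($y\in B/A=B$), whereas the paper invokes Proposition~\ref{prop:equidistant}, which is itself a consequence of shadow-closedness, to get $x\sim y$ and contradict $d(x,y)=2$.
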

 \begin{proof}
     If $H$ is not locally convex, there exist $x, y \in H$ and $z \in H^c$  such that $d(x, y) = 2$ and $z \in I(x, y)$. As $A$ is convex, $x$ and $y$ cannot both be in $A$. If $x \in A^+, y \in A$ and $z \in B^+$, the implication constraints are violated and $\alpha$ is not a satisfying assignment. If $x \in A^+, y \in A$ and $z \in B$, we get $x \sim y$ by Proposition \ref{prop:equidistant}, which contradicts our assumption that $d(x, y) = 2$. If $x, y \in A^+$ and $z \in B$, one of the clauses in $\Phi_3$ is not satisfied. Thus, we must have $x, y \in A^+$ and $z \in B^+$. 

     Let $ab$ be an edge with $a \in A$ and $b \in B$. By Proposition \ref{prop:equidistant}, we have $d(x, a) = d(x, b), d(y, a) = d(y, b)$ and $d(z, a) = d(z, b)$.  Let $d(x, a) = k$. By the triangle inequality, $k-1 \leq d(z, a) \leq k+1$. If $d(z, a) = k-1$, then $z \in I(x, a)$, i.e., $x \rightarrow_A z$ and by the implication constraints we get $z \in A^+$, which is a contradiction. Similarly, if $d(z, a) = d(z, b) = k+1$, we get $z \rightarrow_B x$, whence $x \in B^+$, which is a contradiction. Thus, $d(z, a) = d(z, b) = d(x, a) = d(x, b) = k$, and by symmetry we also have $d(y, a) = d(y, b) = k$. If $k = 1$, we get $b \in I(x, y)$, which violates the constraints of $\Phi_3$. Thus, $k \geq 2$.
 \end{proof}

\subsection{Halfspace separation in bridged graphs}
The case of bridged graphs is settled by the following lemma.
\begin{lem}
\label{lem:bridged}
If $G$ is a bridged graph, then any solution to the {\sc 2-SAT} $\Phi$ corresponds to a solution to {\sc HalfspaceSeparation}.
\end{lem}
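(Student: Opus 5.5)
We prove that for any satisfying assignment $\alpha$ of $\Phi$, the sets $H=A\cup A^+$ and $H^c=B\cup B^+$ are locally convex. Bridged graphs are weakly modular, hence meshed. By Proposition \ref{prop:H-is-connected} and Lemma \ref{lem:convex-iff-locallyconvex}, local convexity then makes $H,H^c$ complementary halfspaces separating $A$ and $B$. By the symmetry $A\leftrightarrow B$, $a_x\leftrightarrow\overline{a_x}$, it suffices to treat $H$.

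Suppose $H$ is not locally convex. Lemma \ref{lem:distances-equal-in-countereg} gives $x,y\in A^+$ and $z\in B^+$ with $d(x,y)=2$ and $z\in I(x,y)$. Fix an edge $ab$ with $a\in A$, $b\in B$; then all six distances from $x,y,z$ to $a$ and to $b$ equal some $k\ge 2$. Apply the triangle condition with base point $a$ to the edges $xz$ and $zy$. This gives $w_1\sim x,z$ and $w_2\sim z,y$ with $d(w_1,a)=d(w_2,a)=k-1$. By Theorem \ref{thm:convex-balls} the ball $B_{k-1}(a)$ is convex. Since $z\notin B_{k-1}(a)$ and $z$ is a common neighbour of $w_1,w_2$, we cannot have $d(w_1,w_2)=2$. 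So $w_1=w_2$ or $w_1\sim w_2$.

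In the second case, if some $w_i$ is adjacent to both $x$ and $y$, we rename it $w$ and go to the first case. Otherwise $x,w_1,w_2,y,z$ forms an induced $C_5$: $x\not\sim y$ since $d(x,y)=2$, and the $w_i$ lie at distance $k-1$ from $a$ while $x,y,z$ lie at distance $k$. An induced $C_5$ is impossible in a bridged graph by the characterization of \cite{Che89}. So we may assume there is $w\sim x,y$ with $d(w,a)=k-1$, and in particular $w\in I(x,y)$.

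Now we use the 2-SAT constraints on $w$.
\begin{itemize}
\item If $w\in V^+$: Proposition \ref{prop:equidistant} gives $d(w,b)=k-1$. Then $w\in I(x,a)$, so $x\to_A w$ forces $w\in A^+$. Also $w\in I(z,b)$, so $z\to_B w$ forces $w\in B^+$. This is a contradiction.
\item If $w\in B$: then $I(x,y)\cap B\neq\emptyset$ with $x,y\in A^+$, which violates $\Phi_3$.
\item If $w\in A$: since $w\sim x$ and $d(x,b)=k$, we have $d(w,b)\in\{k-1,k,k+1\}$.
  \begin{itemize}
  \item If $d(w,b)=k-1$, then $w\in I(z,b)\subseteq\mathfrak c(B\cup\{z\})$. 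Hence $z\in A/B=A$, contradicting $z\in V^+$.
  \item If $d(w,b)\in\{k,k+1\}$, more work is needed.
  \end{itemize}
\end{itemize}

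This last subcase, $w\in A$ with $d(w,b)\ge k$, is the main obstacle. Our plan is to use an edge of $A$ to $B$ near $w$ rather than the fixed edge $ab$. We take a shortest path from $w$ to $B$ and consider the last $A$-vertex $a'$ on it and the first $B$-vertex $b'$ after it. Such a path from $w\in A$ to $B$ passes only through $A$ and $B$ before reaching $B$, since vertices of $V^+$ are equidistant by Proposition \ref{prop:equidistant}; this still has to be checked.

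We then redo the argument with the edge $a'b'$ in place of $ab$. Lemma \ref{lem:distances-equal-in-countereg} holds for every such edge, so $x,y,z$ are all at a common distance $k'$ from $a'$ and $b'$. If $w\in I(z,b')$, we get $z\in A/B$ as above. Otherwise we use the convexity of $B_{k'-1}(b')$ and the absence of induced $C_4$: $x,w,y,z$ form a $4$-cycle with $x\not\sim y$, so we must have $w\sim z$. Then $w$ is a common neighbour of $z\in B^+$ and $x,y$, and the goal is to show $w\in I(z,b')$ or $z\in I(w,b')$, either of which contradicts shadow-closedness.
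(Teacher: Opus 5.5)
\textbf{There is a genuine gap: the induced-$C_5$ step is false, and the case it was meant to exclude is never handled.} You obtained $w_1$ and $w_2$ from the triangle condition with respect to $a$, applied to the edges $xz$ and $zy$. So both are adjacent to $z$, and the cycle $x w_1 w_2 y z$ has the chords $zw_1$ and $zw_2$; it is not an induced $C_5$. What you actually have is a fan: $z$ joined to the induced path $x w_1 w_2 y$. Bridged graphs contain this configuration. In the triangular grid (or any large ball in it), let $x,z,y$ be three consecutive vertices on one side of the hexagon $S_k(a)$. Then $z$ is the only common neighbour of $x$ and $y$, so no $w\sim x,y$ with $d(w,a)=k-1$ exists.

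The existence of such a $w$ is the distance-two case of pseudo-modularity. Your argument is essentially the paper's proof for pseudo-modular graphs, and it does not transfer: convexity of a ball around the single vertex $a$ cannot exclude the fan. Your other open subcase, $w\in A$ with $d(w,b)\ge k$, comes only from choosing the base point $a$. With respect to $b$, a common neighbour $w$ of $x,y,z$ with $d(w,b)=k-1$ lying in $A$ satisfies $w\in I(x,b)$, hence $x\in A/B$, and the other two cases close as you wrote. As written, though, that subcase is also unproved, since the claim about shortest $w$--$B$ paths is left unchecked.

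\textbf{The missing idea, which is how the paper argues, is to centre the ball at an edge of residue vertices lying between $a$ and $b$.} Take $x_0\in S_x^{ab}$ and $y_0\in S_y^{ab}$.
\begin{enumerate}
\item The equality constraints give $S_x\cap S_y=\emptyset$, so $x_0\neq y_0$.
\item The implication constraints give $x_0,y_0\in A^+$.
\item Both are adjacent to $b$, so $\Phi_3$ forces $x_0\sim y_0$; otherwise $b\in I(x_0,y_0)$.
\item Set $m=d(x,x_0)=d(y,y_0)=k-1$. Then $m\le d(z,x_0),d(z,y_0)\le m+1$. If $d(z,x_0)=m$, then $x_0\in S_z^{ab}$, and the implication constraints give $a_z=a_{x_0}=a_x$, a contradiction; likewise for $y_0$. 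Hence $d(z,x_0)=d(z,y_0)=m+1$.
\item The edge $\{x_0,y_0\}$ is convex, so Theorem~\ref{thm:convex-balls} makes $B_m(\{x_0,y_0\})$ convex. It contains $x$ and $y$ but not $z\in I(x,y)$, a contradiction.
\end{enumerate}
This route needs no case analysis on the vertices produced by the triangle condition.
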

\begin{proof}
Let $H$ and $H^c$ be the partition of $V$ given by a solution of $\Phi$. We shall prove by contradiction that $H$ (and $H^c$) are convex. Suppose $H$ is not convex. By Proposition \ref{prop:H-is-connected}, it is connected. Hence, by Lemma \ref{lem:convex-iff-locallyconvex}, $H$ is not locally convex. By Lemma \ref{lem:distances-equal-in-countereg}, there exist $x, y \in A^+$ and $z \in B^+$ such that $z \in I(x, y)$, $d(x, y) = 2$, and for any edge $ab$ with $a \in A$ and $b \in B$, we have $d(x, a) = d(x, b) = d(y, a) = d(y, b) = d(z, a) = d(z, b)$. By the equality constraints, we know that $S_x \cap S_y = \emptyset$. Fix an edge $ab$ with $a \in A$ and $b \in B$ and let $x_0 \in S_x^{ab}$ and $y_0 \in S_y^{ab}$. We know that $x_0, y_0 \in A^+$. If $x_0 \not \sim y_0$, we would have $b \in I(x_0, y_0)$, which violates $\Phi_3$. Thus, we have $x_0 \sim y_0$. Let $d(x, x_0) = d(y, y_0) = k$. We must have $k \leq d(z, x_0), d(z, y_0) \leq k+1$. But if $d(z, x_0) = k$, we would have $x_0 \in S_z$ which is not possible. Hence $d(z, x_0) = d(z, y_0) = k+1$.

 Let $P_x$ and $P_y$ denote shortest paths from $x$ to $x_0$ and $y$ to $y_0$, respectively. Consider the ball $B_k(\{ x_0,y_0\})$ around the convex set $\{ x_0,y_0\}$. Since $d(x_0,x)=d(y,y_0)=k$, the vertices $x$ and $y$ belong to $B_k(\{ x_0,y_0\})$. Since $G$ is bridged,  $B_k(\{ x_0,y_0\})$ is convex by Theorem~\ref{thm:convex-balls}. Since $z\in I(x,y)$, we conclude that $z\in B_k(\{ x_0,y_0\})$, which contradicts the fact that $d(z, x_0) = d(z, y_0) = k+1$.
\end{proof}
\subsection{Halfspace separation in weakly bridged graphs}
Recall that a graph $G$ is said to satisfy the property SD($v$) (simple descent on balls around v) if for every $i \in \mathbb{N}$, the following condition holds: for every $A \subseteq S_{i+1}(v)$
such that $A$ is a clique, there exists a vertex $w$ such that $d(w, v) = i$ and $w$ is adjacent to all vertices in $A$, i.e.,
$A \cup \{w\}$ is also a clique. $G$ is said to satisfy the $k$-SD property if it satisfies SD for all cliques $A$ with at most $k$ vertices.
\begin{lem}
    If $G$ satisfies 3-SD, then any solution of $\Phi$ partitions $V$ into two locally convex sets.
\end{lem}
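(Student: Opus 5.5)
We argue by contradiction. Fix a solution $\alpha$ of $\Phi$ and suppose $H$ is not locally convex; the case of $H^c$ is symmetric, exchanging the roles of $A,B$ and of $0,1$. By Lemma~\ref{lem:distances-equal-in-countereg} there are $x,y\in A^+$ and $z\in B^+$ with $d(x,y)=2$ and $z\in I(x,y)$. Moreover, for every edge $ab$ with $a\in A$, $b\in B$, all six distances from $x,y,z$ to $a$ and $b$ equal some $k\ge 2$. Fix one such edge $ab$.

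The main tool is a dichotomy for an auxiliary vertex $w\in V^+$ adjacent to both $a$ and $b$, with respect to any $p\in\{x,y,z\}$. Since $w\sim a$, we have $d(p,w)\in\{k-1,k,k+1\}$.
\begin{itemize}
\item If $d(p,w)=k+1$, then $a,b\in I(p,w)$, and $\Phi_3$ forbids $p$ and $w$ from being on opposite sides in either direction, which is a contradiction.
\item If $d(p,w)=k-1$, then $w\in S_p^{ab}$, so $w$ lies on the same side as $p$ by the implication constraints (see the Remark).
\item The only remaining case is $d(p,w)=k$. Here $\{a,b,w\}$ is a triangle in $S_k(p)$, and 3-SD$(p)$ yields a vertex $u\sim a,b,w$ with $d(p,u)=k-1$. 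As in the proof of Proposition~\ref{prop:Sx-non-empty}, Proposition~\ref{prop:equidistant} forces $u\in V^+$, so $u\in S_p^{ab}$.
\end{itemize}

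The key steps are as follows. First, by 2-SD (that is, TC) choose $x_0\in S_x^{ab}$; then $x_0\in A^+$. Since $x_0\sim a$, we have $d(z,x_0)\ge k-1$. If $d(z,x_0)=k-1$, then $x_0\in S_x\cap S_z$, and the equality constraints applied to the pair $x,z$ (with $z\in I(x,z)$ and $x\in I(x,z)$) force $a_x=a_{x_0}=a_z$, which is a contradiction. Hence $d(z,x_0)=k$. Applying 3-SD$(z)$ to the triangle $\{a,b,x_0\}$ gives $w\sim a,b,x_0$ with $d(z,w)=k-1$. So $w\in S_z^{ab}$, and therefore $w\in B^+$. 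Now examine $d(x,w)\in\{k-1,k\}$. If it is $k-1$, then $w\in S_x\cap S_z$, and we obtain the same contradiction as before. The same reasoning with $y$ in place of $x$ produces $y_0\in S_y^{ab}$ and the corresponding vertex $w'$.

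The remaining case, $d(x,w)=k$ (together with $d(z,x_0)=k$), is where I expect the real difficulty. Here we have adjacent vertices $x_0\in A^+$ and $w\in B^+$, both in the common neighbourhood of $a$ and $b$, with the distances to $x,z$ ``crossed''. My plan is to use the same edge $ab$ for $y$ and to exploit $z\in I(x,y)$. First, $x_0\not\sim y_0$ would put $b\in I(x_0,y_0)$, violating $\Phi_3$, so $x_0\sim y_0$. Then 3-SD applied around $z$ to the triangle $\{a,x_0,y_0\}$, or around $x$ to $\{a,x_0,w\}$, should produce a vertex at distance $k-1$ from two of $x,y,z$ that lies in $S_\cdot^{ab}$ for both, which contradicts the equality constraints. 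If this direct attempt fails, I will instead run an induction on $k$. The idea is to replace $(x,y,z)$ by neighbours one step closer to $\{a,b\}$, obtained from 3-SD descents, namely a new triple $x'\in A^+$, $y'\in A^+$, $z'\in B^+$ with $d(x',y')=2$ and $z'\in I(x',y')$, all at distance $k-1$ from $a$ and $b$. The implication constraints keep $x',y'$ in $A^+$ and $z'$ in $B^+$. The descent then eventually reaches $k=1$, which Lemma~\ref{lem:distances-equal-in-countereg} excludes. Keeping the side labels consistent during this descent, using the implication constraints along shortest paths to $a$ and $b$, is the step I expect to need the most care.
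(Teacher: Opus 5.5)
There is a genuine gap. The case you flag as ``the real difficulty'' is $d(z,x_0)=d(x,w)=k$, with adjacent $x_0\in S_x^{ab}\cap A^+$ and $w\in S_z^{ab}\cap B^+$. That case carries the whole content of the lemma, and none of your suggested continuations closes it.
Applying 3-SD$(z)$ to $\{a,x_0,y_0\}$ gives $t\sim a,x_0,y_0$ with $d(z,t)=k-1$. But nothing makes $t$ adjacent to $b$, so $t$ need not lie in any $S^{ab}_{\cdot}$, and no clause of $\Phi$ is triggered. Nor is $t\in A$ excluded, since $at$ is not an $A$--$B$ edge.
The triple $\{a,x_0,w\}$ does not lie in a sphere around $x$, because $d(x,x_0)=k-1$, so 3-SD$(x)$ does not apply to it. Applying 3-SD$(x)$ to $\{a,b,w\}$ instead only produces another element of $S_x^{ab}$ adjacent to $w$, with the same crossed distances.
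The induction on $k$ has no mechanism for producing $x',y'\in A^+$, $z'\in B^+$ with $d(x',y')=2$ and $z'\in I(x',y')$ one level closer to $ab$.
The structural reason is that descending from $ab$ towards $x,y,z$ keeps you inside $V^+\cap N(a)\cap N(b)$; Proposition~\ref{prop:equidistant} forces this. So you never reach a vertex of $A$ or $B$ to which shadow-closedness, or Proposition~\ref{prop:equidistant} for a new $A$--$B$ edge, could be applied.
Your first bullet is also wrong: if $d(p,w)=k+1$ then $a,b\in I(p,w)$, and $\Phi_3$ \emph{forces} $p$ and $w$ onto opposite sides rather than forbidding it. This is harmless only because $z\sim x$ already gives $d(z,x_0)\le k$.

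The missing idea is to descend in the opposite direction, using $a$ and $b$ as base points.
Since $xz$ is an edge in $S_k(a)$ and $k\ge 2$, TC$(a)$ gives $u\sim x,z$ with $d(u,a)=k-1$.
Then $u\notin V^+$, since otherwise $\emptyset\neq S_u^{ab}\subseteq S_x\cap S_z$, contradicting the equality constraints as $a_x\neq a_z$. Also $u\notin B$, since otherwise $u\in I(x,a)$ puts $x$ in $B/A=B$. So $u\in A$. Moreover $d(u,b)=k$, since otherwise $u\in I(x,b)$ puts $x$ in $A/B=A$.
Also $u\sim y$, since otherwise $z\in I(y,u)$, i.e.\ $y\to_A z$.
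Now $\{x,z,u\}$ is a triangle in $S_k(b)$, and 3-SD$(b)$ yields $w\sim x,z,u$ with $d(w,b)=k-1$. The same arguments give $w\notin V^+$ and $w\notin A$, so $w\in B$.
Then $uw$ is an $A$--$B$ edge, and Proposition~\ref{prop:equidistant} gives $d(y,w)=d(y,u)=1$. Hence $w\in I(x,y)\cap B$ with $x,y\in A^+$, which violates $\Phi_3$. Alternatively, $d(x,u)=1$ for the $A$--$B$ edge $uw$ already contradicts Lemma~\ref{lem:distances-equal-in-countereg}.
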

\begin{proof}
  Let $H$ and $H^c$ be the partition of $V$ given by a solution of $\Phi$. We shall prove by contradiction that $H$ (and $H^c$) are locally convex. Suppose $H$ is not locally convex. By Lemma \ref{lem:distances-equal-in-countereg}, there exist $x, y \in A^+$ and $z \in B^+$ such that $z \in I(x, y)$, $d(x, y) = 2$, and for any edge $ab$ with $a \in A$ and $b \in B$, we have $d(x, a) = d(x, b) = d(y, a) = d(y, b) = d(z, a) = d(z, b) \geq 2$. By the equality constraints, we know that $S_x \cap S_y = \emptyset$. Fix an edge $ab$ with $a \in A$ and $b 
    \in B$ and let $d(x, a) = d(x, b) = d(y, a) = d(y, b) = d(z, a) = d(z, b) = k$.

    By the triangle condition TC (which is implied by 3-SD), $x$ and $z$ have a common neighbour $u$ such that $d(u, a) = k-1$. If $u \in V^+$, we have $S_u^{ab} \subseteq S_x \cap S_z$, which is a contradiction as $\alpha(a_x) \neq \alpha(a_z)$. If $u \in B$, we have $ \mathfrak{c}(A \cup \{x\}) \cap B \neq \emptyset$ and thus $x \in B/A$, which contradicts the fact that $B$ is shadow-closed. Thus, we must have $u \in A$. If $d(u, b) = k-1$, we get $x, z \in A/B$ which is a contradiction. Hence we must have $d(u, b) = k$. If $u \not \sim y$, we have $z \in I(y, u)$, i.e., $y \rightarrow_A z$ and the implication constraints would be violated. Hence, we have $u \sim y$.
    
    As $G$ satisfies the 3-SD condition, $x, z$ and $u$ have a common neighbour $w$ such that $d(w, b) = k-1$. If $w \in V^+$, we would have $S_w^{ab} \subseteq S_x^{ab} \cap S_z^{ab}$, which is a contradiction. If $w \in A$, we would have $x, z \in A/B$, which contradicts the fact that $A$ is shadow-closed. If $w \in B$, then $uw$ is an edge from $A$ to $B$ and by Proposition \ref{prop:equidistant} we would have $y \sim w$. But then $w \in I(x, y) \cap B$ and the constraint $\Phi_3$ is violated.
  
\end{proof}

Thus, if $G$ also has the property given by Theorem \ref{lem:convex-iff-locallyconvex}, then the solutions of the 2-SAT are exactly the geodesic complementary halfspaces separating $A$ and $B$. This is the case for weakly bridged graphs since they are weakly modular and hence meshed. Thus, we have the following corollary which settles halfspace separation for weakly bridged graphs.
\begin{lem}
\label{lem:weakly-bridged}
If $G$ is a weakly bridged graph, then any solution to the {\sc 2-SAT} $\Phi$ corresponds to complementary halfspaces separating $A$ and $B$.
\end{lem}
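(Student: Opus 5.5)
The plan is to assemble Lemma~\ref{lem:weakly-bridged} from the results already established, since it is essentially a corollary. Let $\alpha$ be a satisfying assignment of $\Phi$, and let $H = A \cup A^+$ and $H^c = B \cup B^+$ be the induced partition. We must show that $H$ and $H^c$ are both convex; they then are complementary halfspaces with $A \subseteq H$ and $B \subseteq H^c$.

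\textbf{Step 1 (the hypothesis is satisfied).} We check that weakly bridged graphs satisfy the hypothesis of the previous lemma, namely 3-SD. By \cite{osa13}, recalled in the preliminaries, weakly bridged graphs satisfy SD$_n(v)$ for every $n$ and every vertex $v$. This gives simple descent for cliques of any size, in particular for cliques of at most three vertices, so 3-SD holds. It also gives 2-SD, which is the triangle condition TC. This justifies the standing assumption used in Propositions~\ref{prop:Sx-non-empty} and~\ref{prop:H-is-connected} and in Lemma~\ref{lem:distances-equal-in-countereg}.

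\textbf{Step 2 (local convexity and connectivity).} Applying the preceding lemma for 3-SD graphs, $H$ is locally convex. The same lemma applied with the roles of $A$ and $B$ swapped shows that $H^c$ is locally convex. The construction of $\Phi$ is symmetric under exchanging $A \leftrightarrow B$ and complementing all variables: $\Phi_1$ is invariant, and the two halves of $\Phi_2$ and of $\Phi_3$ are interchanged. By Proposition~\ref{prop:H-is-connected}, both $H$ and $H^c$ induce connected subgraphs.

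\textbf{Step 3 (from local convexity to convexity).} Weakly bridged graphs are weakly modular by definition, and weakly modular graphs are meshed, since (TC) and (QC) imply (QC$^-$). So Lemma~\ref{lem:convex-iff-locallyconvex} applies: a connected, locally convex induced subgraph is convex. Hence $H$ and $H^c$ are both convex, i.e.\ complementary halfspaces, and they contain $A$ and $B$ respectively by construction.

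\textbf{Main obstacle.} The only step needing care is the symmetry claim in Step 2, i.e.\ that the previous lemma, stated for $H$, also covers $H^c$. If the symmetry argument seemed too loose, the fallback is to rerun that proof with $A \leftrightarrow B$, $A^+ \leftrightarrow B^+$, $\rightarrow_A \leftrightarrow \rightarrow_B$ and the two halves of $\Phi_3$ exchanged. Lemma~\ref{lem:distances-equal-in-countereg} is symmetric in the same way, so every case in that proof transfers verbatim.
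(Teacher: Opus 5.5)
Your proposal is correct and matches the paper's argument: the lemma follows from the SD property of weakly bridged graphs (which gives 3-SD and TC), the preceding 3-SD lemma for local convexity, Proposition~\ref{prop:H-is-connected} for connectivity, and weak modularity $\Rightarrow$ meshedness, so that Lemma~\ref{lem:convex-iff-locallyconvex} upgrades local convexity to convexity. Your check that $\Phi$ is invariant under $A \leftrightarrow B$ with all variables complemented handles the case of $H^c$, a symmetry the paper leaves implicit.
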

\subsection{Halfspace separation in pseudo-modular graphs}
The case of pseudo-modular graphs is settled by the following lemma.
\begin{lem}
\label{lem:pseudo-modular}
If $G$ is a pseudo-modular graph, then any solution to the {\sc 2-SAT} $\Phi$ corresponds to a solution to {\sc HalfspaceSeparation}.
\end{lem}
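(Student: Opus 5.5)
Since pseudo-modular graphs are weakly modular, they are meshed. By Proposition~\ref{prop:H-is-connected}, $H$ and $H^c$ induce connected subgraphs. So by Lemma~\ref{lem:convex-iff-locallyconvex} it suffices to show that $H$ and $H^c$ are locally convex. I argue for $H$; the case of $H^c$ is symmetric, with the roles of $A$ and $B$ (and of $a$ and $b$) exchanged.

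Suppose $H$ is not locally convex. Lemma~\ref{lem:distances-equal-in-countereg} gives $x,y\in A^+$ and $z\in B^+$ with $d(x,y)=2$ and $z\in I(x,y)$. It also gives, for a fixed edge $ab$ with $a\in A$ and $b\in B$, that all six distances from $x,y,z$ to $a$ and $b$ equal some $k\geq 2$. The key move is to apply the pseudo-modularity condition to the triple $(b,x,y)$: we have $d(x,y)=2$ and $d(b,x)=d(b,y)=k\geq 2$. This yields a vertex $w\sim x,y$ with $d(b,w)=k-1$. In particular $w\in I(x,y)$ and $w\in I(x,b)\cap I(y,b)$. I then split into cases according to whether $w$ lies in $A$, $B$ or $V^+$.

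\emph{Case $w\in B$.} Then $w\in I(x,y)\cap B$ with $x,y\in A^+$, which violates the clause $(\overline{a_x}\lor\overline{a_y})$ of $\Phi_3$.

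\emph{Case $w\in A$.} Then $w\in I(x,b)\subseteq \mathfrak{c}(B\cup\{x\})$, so $x\in A/B$. This contradicts shadow-closedness, since $x\in V^+$.

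\emph{Case $w\in V^+$.} Proposition~\ref{prop:equidistant} gives $d(w,a)=d(w,b)=k-1$, so $w\in I(x,a)\cap I(x,b)$ and likewise $w\in I(y,a)\cap I(y,b)$. By Proposition~\ref{prop:Sx-non-empty}, $S_w^{ab}$ is non-empty. Any $w_0\in S_w^{ab}$ is adjacent to $a$ and $b$ and lies in $I(w,a)\cap I(w,b)$. Concatenating geodesics through $w$, it follows that $w_0\in I(x,a)\cap I(x,b)$, hence $w_0\in S_x$, and likewise $w_0\in S_y$. So $S_x\cap S_y\neq\emptyset$. The equality constraints $\Phi_1$ applied to $z\in I(x,y)$ then force $a_z=a_{w_0}$. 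The implication constraints give $x\to_A w_0$ and $w_0\to_B x$ (Remark above), so $a_{w_0}=a_x=1$. Hence $z\in A^+$, a contradiction.

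The main point requiring care is the choice of which base vertex to descend toward. Descending toward $a$, as one would do in the weakly bridged proof, leaves the case $w\in A$ unresolved, and that case needs the 3-SD argument. Descending toward $b$ instead makes the case $w\in A$ immediately contradict shadow-closedness of $A$. Pseudo-modularity, unlike TC and QC alone, is exactly what guarantees this descent for a pair at distance $2$. The only remaining routine check is the metric verification that $S_w^{ab}\subseteq S_x\cap S_y$.
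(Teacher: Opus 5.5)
Your proof is correct and is essentially the paper's own argument. You apply pseudo-modularity to $(b,x,y)$ to get a common neighbour $w$ of $x,y$ with $d(w,b)=k-1$, then rule out $w\in V^+$ via $S_w^{ab}\subseteq S_x\cap S_y$ and the equality constraints, $w\in A$ via shadow-closedness ($x\in A/B$), and $w\in B$ via $\Phi_3$. One harmless slip: the Remark gives $x\to_B w_0$, not $w_0\to_B x$, but you only need $x\to_A w_0$ to conclude $a_{w_0}=1$.
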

\begin{proof}
    Let $H$ and $H^c$ be the partition of $V$ given by a solution of $\Phi$. We shall prove by contradiction that $H$ (and $H^c$) are convex. Suppose $H$ is not convex. By Proposition \ref{prop:H-is-connected}, it is connected. Hence, by Lemma \ref{lem:convex-iff-locallyconvex}, $H$ is not locally convex. By Lemma \ref{lem:distances-equal-in-countereg}, there exist $x, y \in A^+$ and $z \in B^+$ such that $z \in I(x, y)$, $d(x, y) = 2$, and for any edge $ab$ with $a \in A$ and $b \in B$, we have $d(x, a) = d(x, b) = d(y, a) = d(y, b) = d(z, a) = d(z, b) \geq 2$. By the equality constraints, we know that $S_x \cap S_y = \emptyset$. Fix an edge $ab$ with $a \in A$ and $b 
    \in B$ and let $d(x, a) = d(x, b) = d(y, a) = d(y, b) = d(z, a) = d(z, b) = k$.

    Since $d(x, y) = 2$ and $d(x, b) = d(y, b) = k \geq 2$, by the definition of pseudo-modular graphs, $x$ and $y$ have a common neighbour $u$ such that $d(u, b) = k-1$. If $u \in V^+$, then we have $d(u, a) = d(u, b)= k-1$ and $S_u^{ab} \subseteq S_x \cap S_y$, which is a contradiction. If $u \in A$, then $u \in \mathfrak{c}(x \cup B) \cap A$, and thus $x \in A/B$ which contradicts the fact that $A$ is shadow-closed. Thus, we must have $u \in B$. Thus $I(x, y) \cap B \neq \emptyset$, which violates $\Phi_3$.
  
\end{proof}
\subsection{Halfspace separation in matroid basis graphs}
The case of matroid basis graphs is settled by the following lemma.
\begin{lem}
\label{lem:mbg}
If $G$ is a matroid basis graph, then any solution to the {\sc 2-SAT} $\Phi$ corresponds to a solution to {\sc HalfspaceSeparation}.
\end{lem}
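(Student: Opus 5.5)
The plan follows the template of Lemmas~\ref{lem:bridged} and~\ref{lem:pseudo-modular}, with the interval condition (IC) and the positioning condition (PC) replacing convexity of balls and pseudo-modularity.

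\textbf{Reduction.} Suppose $H$ is not convex, and argue by contradiction. Matroid basis graphs are meshed and $H$ is connected by Proposition~\ref{prop:H-is-connected}, so by Lemma~\ref{lem:convex-iff-locallyconvex} $H$ is not locally convex. Lemma~\ref{lem:distances-equal-in-countereg} then gives $x,y\in A^+$ and $z\in B^+$ with $d(x,y)=2$ and $z\in I(x,y)$. Fix an edge $ab$ with $a\in A$, $b\in B$. All six distances from $x,y,z$ to $a,b$ equal some common value $k\ge 2$. Moreover $S_x\cap S_y=\emptyset$, since otherwise the equality constraints would force $a_z=a_x$.

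\textbf{Key claim: every middle vertex $w$ of $I(x,y)$ satisfies $d(w,b)\ge k$ (and symmetrically $d(w,a)\ge k$).} Since $w\sim x$, we have $d(w,b)\in\{k-1,k,k+1\}$. I would split by where $w$ lies.
\begin{itemize}
\item If $w\in B$, then $I(x,y)\cap B\neq\emptyset$, which violates $\Phi_3$.
\item If $w\in A$ and $d(w,b)=k-1$, then $w\in I(x,b)$, so $x\in A/B$. This contradicts shadow-closedness.
\item If $w\in V^+$ and $d(w,b)=k-1$, then Proposition~\ref{prop:equidistant} gives $d(w,a)=k-1$. Hence $S_w^{ab}\subseteq I(w,a)\cap I(w,b)\subseteq I(x,a)\cap I(x,b)$, and likewise for $y$. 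Therefore $\emptyset\ne S^{ab}_w\subseteq S_x\cap S_y$ (nonempty by Proposition~\ref{prop:Sx-non-empty}), a contradiction.
\end{itemize}
The same argument with $a$ in place of $b$ gives $d(w,a)\ge k$.

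\textbf{Using IC and PC.} By IC, $I(x,y)$ induces a square, a pyramid or an octahedron, and it contains a square $x\,m\,y\,m'$. By PC applied with $u=b$,
\[
d(m,b)+d(m',b)=d(x,b)+d(y,b)=2k .
\]
Together with the key claim, this forces $d(m,b)=d(m',b)=k$, and likewise $d(m,a)=d(m',a)=k$. So every middle vertex lying on a square through $x,y$ is at distance exactly $k$ from both $a$ and $b$. This covers all middle vertices in the square and octahedron cases, and the two non-apex middle vertices in the pyramid case. The apex $z$ is handled directly by Lemma~\ref{lem:distances-equal-in-countereg}.

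Now no common neighbour of $x$ and $y$ is closer to $b$ than $k$. The aim is to contradict this using the local structure of the graph, for example:
\begin{itemize}
\item apply TC($b$) to the edge $xz$ to obtain $u\sim x,z$ with $d(u,b)=k-1$, and then run the case analysis of the $3$-SD proof on $u$;
\item or apply PC to squares in $I(x,b)$ or $I(z,b)$ that contain $x$ and $z$.
\end{itemize}
In the first option, $u\in V^+$ and $u\in B$ are excluded exactly as in the $3$-SD proof. If $u\in A$, then either $d(u,a)$ drops, or PC on a square through $x,u,y$ should force $u$ into $I(x,y)$ with $d(u,b)=k-1$, contradicting the key claim.

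\textbf{Main obstacle.} I expect the hard step to be this final one, and specifically the pyramid configuration with $z$ as the apex. There, $z$ lies on no square through $x,y$, so PC on $I(x,y)$ says nothing about $z$. One natural route is to consider $I(m,m')$ for the two non-apex middle vertices $m,m'$, which contains $z$, $x$ and $y$. If $m,m'\in H$, this reduces to the previous case. If one of them is in $H^c$, the symmetric statement for $H^c$ applies. I would try to combine PC on squares of $I(m,m')$ with IC applied inside $I(x,b)$ to produce a square $x\,z\,v\,u$ with $d(v,b)=k-1$. PC on that square with $u=a$ should then give $x\to_A z$ or the reverse implication, violating $\Phi_2$. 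The statement for $H^c$ follows by the symmetric argument.
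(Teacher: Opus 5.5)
There is a genuine gap, and it sits exactly where the content of the lemma lies.

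Your key claim is correct for $b$, but it does not dualize to $a$, because the configuration is asymmetric: $x$ and $y$ both lie in $A^+$. A middle vertex $w\in A\cap I(x,y)$ with $d(w,a)=k-1$ is excluded neither by shadow-closedness ($w\in I(x,a)$ with $w,a\in A$ gives nothing) nor by $\Phi_3$. In fact this is precisely the configuration that must be refuted. TC with respect to $a$ on the edge $xz$ gives $u\sim x,z$ with $d(u,a)=k-1$. If $u\in V^+$ then $S_x\cap S_z\neq\emptyset$, and if $u\in B$ then $x\in B/A$; so $u\in A$. Then $d(u,b)=k$ (else $x\in A/B$) and $u\sim y$ (else $y\to_A z$). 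So $u$ is a middle vertex of $I(x,y)$ with $d(u,a)=k-1$, and if your ``symmetric'' claim were available the lemma would be immediate. Hence ``likewise $d(m,a)=d(m',a)=k$'' is unjustified as written. When $u$ lies on a square $x\,u\,y\,z'$ it can be repaired, which is the paper's Case~1. PC gives $d(z',a)=k+1\neq d(z',b)=k$, so $z'\notin V^+$, and $z'\notin B$ by $\Phi_3$. Also $z'\in A$ contradicts convexity of $A$, since $x\in I(u,z')$. No such repair is available when $u$ is the apex of a pyramid.

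Beyond that, you never derive the final contradiction; you only list options. The first option is mis-stated: with TC with respect to $b$ on $xz$, the excluded cases are $u\in V^+$ and $u\in A$, which leaves $u\in B$, and that alone contradicts nothing. You have also placed the obstacle in the wrong spot. If $z$ is the apex, then $u$ lies on a square through $x,y$ and the argument above finishes. The hard case is when $u$ is the apex. Then $I(x,y)=\{x,y,z,z',u\}$ with $x\,z\,y\,z'$ a square, and PC gives $d(z',a)=d(z',b)=k$. Moreover $I(z,z')=\{x,y,u,z,z'\}$, since otherwise $I(x,y)$ would be an octahedron.

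The missing idea is to manufacture a vertex of $B$ adjacent to $x$ but not to $z$, so that $z\to_B x$ violates $\Phi_2$. The paper does this in four steps:
\begin{enumerate}
\item TC with respect to $b$ on the edge $zy$ gives $v$ with $d(v,b)=k-1$. This $v$ is forced into $B$ with $d(v,a)=k$, and it lies at distance $2$ from $x$, $u$ and $z'$.
\item TC with respect to $v$ on the edge $xz'$ gives $w\sim x,z',v$, and $w\not\sim y$ (else $w\in I(x,y)$).
\item So $z'\,y\,v\,w$ is a square, and PC yields $d(w,a)=k$ and $d(w,b)=k-1$, forcing $w\in B$.
\item Finally $w\not\sim z$ (else $w\in I(z,z')$), hence $x\in I(z,w)$.
\end{enumerate}
Nothing in your sketch produces this vertex $w$, and the contradiction comes from $\Phi_2$, not from your key claim.
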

\begin{proof}
    Let $H$ and $H^c$ be the partition of $V$ given by a solution of $\Phi$. We shall prove by contradiction that $H$ (and $H^c$) are convex. Suppose $H$ is not convex. By Proposition \ref{prop:H-is-connected}, it is connected. Hence, by Lemma \ref{lem:convex-iff-locallyconvex}, $H$ is not locally convex. By Lemma \ref{lem:distances-equal-in-countereg}, there exist $x, y \in A^+$ and $z \in B^+$ such that $z \in I(x, y)$, $d(x, y) = 2$, and for any edge $ab$ with $a \in A$ and $b \in B$, we have $d(x, a) = d(x, b) = d(y, a) = d(y, b) = d(z, a) = d(z, b) \geq 2$. Fix an edge $ab$ with $a \in A$ and $b \in B$ and let $d(x, a) = d(x, b) = d(y, a) = d(y, b) = d(z, a) = d(z, b) = k \geq 2$.

    By the triangle condition TC, $x$ and $z$ have a common neighbour $u$ such that $d(u, a) = k-1$. If $u \in V^+$, we have $S_u^{ab} \subseteq S_x \cap S_z$, which is a contradiction as $\alpha(a_x) \neq \alpha(a_z)$. If $u \in B$, we have $ \mathfrak{c}(A \cup \{x\}) \cap B \neq \emptyset$ and thus $x \in B/A$, which contradicts the fact that $B$ is shadow-closed. Thus, we must have $u \in A$. If $d(u, b) = k-1$, we get $x, z \in A/B$ which is a contradiction. Hence we must have $d(u, b) = k$. If $u \not \sim y$, we have $z \in I(y, u)$, i.e., $y \rightarrow_A z$ and the implication constraints would be violated. Hence, we have $u \sim y$. Thus, $z, u \in I(x, y)$. By the interval condition IC, we know that $I(x, y)$ contains a square. We thus have the following two cases.

    \textbf{Case 1}: $u$ is in a square in $I(x, y)$.

    Let $z' \in I(x, y)$ such that $xz'yu$ is a square. By the positioning condition PC, we have $d(z', a) = k+1$ and $d(z', b) = k$. Thus, $z' \not \in V^+$ by Proposition \ref{prop:equidistant}. If $z' \in B$, $\Phi_3$ is violated. If $z' \in A$, then $x, y \in I(u, z')$ which contradicts the convexity of $A$.

    \textbf{Case 2}: $u$ is not in a square.

    In this case, $I(x, y)$ is neither a square nor an octahedron. Thus, by IC, $I(x, y)$ must be a pyramid. Let $z' \in I(x, y)$ such that $xzyz'$ is a square. By PC, $d(z', a) = d(z', b) = k$. Since $I(x, y)$ is a pyramid, we have $u \sim z'$ and $\{x, y, z, z', u\} = I(x, y)$. Hence $x, y, u \in I(z, z')$.  

    If $z$ and $z'$ have four neighbours in common, then $I(z, z')$ must be an octahedron. But this implies that $I(x, y)$ is also an octahedron which is a contradiction. Thus, we must have $\{x, y, u, z, z'\} = I(z, z')$.
    
    By TC, $z$ and $y$ have a common neighbour $v$ such that $d(v, b) = k-1$. If $v \in V^+$, we get $S_z \cap S_y \neq \emptyset$ and if $v \in A$, we get $z, y \in A/B$, neither of which is possible. Thus, we must have $v \in B$. If $d(v, a) = k-1$, we get $z, y \in B/A$ which is a contradiction. Hence $d(v, a) = k$. 
    
   If $x \sim v$, we would have $v \in I(x, y)$ which is a contradiction. If $u \sim v$, then by Proposition \ref{prop:equidistant} we would have $x \sim v$, which is not possible. If $z' \sim v$, we would have $v \in I(z, z')$ which is a contradiction. Thus, $d(x, v) = d(z', v) = d(u, v) = 2$.
    
    By TC, $x, z'$ and $v$ must have a common neighbour, say $w$. Note that $w \not \in \{u, y, z\}$. If $w \sim y$, we would have $w \in I(x, y)$ which is not possible. Hence $z'yvw$ is a square, and by PC we have $d(w, a) = k$ and $d(w, b) = k-1$. Thus, $w \not \in V^+$ and $w \in I(x, b)$, which imply that $w \in B$. 

    If $z \sim w$, we get $w \in I(z, z')$ which is a contradiction. Hence $z \not \sim w$. But this implies $z \rightarrow_B x$ and the implication constraints are violated.
\end{proof}
\subsection{Proof of Theorem \ref{thm:2-SAT-sep}}
To conclude the proof of Theorem \ref{thm:2-SAT-sep}, we show that the set of halfspaces separating the sets $A$ and $B$ is in bijection with the set of satisfying assignments of the {\sc 2-SAT} $\Phi$ whenever $G$ is weakly bridged, pseudo-modular or the basis graph of a matroid.
\begin{proof}
Let $V^+ = V \setminus (A \cup B)$. If $(H, H^c)$ be a pair of complementary halfspaces separating $A$ and $B$, then by construction the assignment $\alpha$ that assigns $a_x = 1$ for all $x \in V^+ \cap H$ and $a_x = 0$ for all $x \in V^+ \cap H^c$ is a solution of $\Phi$. Conversely, let $\alpha$ be any satisfying assignment to $\Phi$ and let $A^+ \defeq v\in V^+ \mid \alpha(v) = 1\}$ and $B^+ \defeq V^+ \setminus A^+$. Consider the sets $H \defeq A \cup A^+$ and $H^c = B \cup B^+$. By Lemmas \ref{lem:weakly-bridged}, \ref{lem:pseudo-modular} and \ref{lem:mbg}, $H$ and $H^c$ are complementary halfspaces.
\end{proof}
Thus, {\sc{HalfspaceSeparation}} can be solved in polynomial time for weakly bridged graphs, pseudo-modular graphs and matroid basis graphs. Consequently, halfspaces can be enumerated in output polynomial time for these graph classes.
\begin{cor}
Halfspaces can be enumerated in output polynomial time for the geodesic convexity of weakly bridged, pseudo-modular and matroid basis graphs.
\end{cor}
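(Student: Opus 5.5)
The corollary follows from the polynomial separation algorithm by a standard backtracking ("flashlight") enumeration over partial assignments, using {\sc HalfspaceSep} as the extension oracle. Fix an ordering $v_1,\dots,v_n$ of $V$. We build a binary search tree whose nodes are pairs $(A,B)$ of disjoint sets with $A \cup B = \{v_1,\dots,v_i\}$. At a node we branch on $v_{i+1}$, creating the two children $(A\cup\{v_{i+1}\},B)$ and $(A,B\cup\{v_{i+1}\})$. A child is explored only if the oracle answers YES, i.e.\ only if some halfspace $H$ satisfies $A\subseteq H$ and $B\subseteq H^c$.

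Two conventions handle degenerate cases. First, the trivial halfspaces $\emptyset$ and $V$ are included by convention. Second, the oracle is only stated for disjoint sets, and when $A$ or $B$ is empty separability is trivially true, since $V$ or $\emptyset$ works. So the real oracle calls are on pairs with both sides nonempty, where Theorem \ref{thm:main} applies.

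At depth $n$, every leaf reached is a pair $(H,V\setminus H)$ that is separable. Since this pair already covers all of $V$, separability means exactly that $H$ is a halfspace. Conversely, every halfspace $H$ is reached, because each prefix of its assignment is separable (witnessed by $H$ itself), so the branch leading to $H$ is never pruned. Distinct leaves give distinct halfspaces, so there are no duplicates.

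For the complexity bound, every explored node has at least one halfspace in its subtree, again by the oracle's correctness. Hence the number of explored nodes is at most $n\cdot h$, where $h$ is the number of halfspaces. Each node costs at most two oracle calls, and each call is polynomial by Theorem \ref{thm:main} via Algorithm {\sc HalfspaceSep}, which runs $O(n)$ shadow-closures and $O(n)$ 2-SAT instances of size polynomial in $n$. So the total running time is $\mathrm{poly}(n)\cdot h$. This is output-polynomial, and in fact polynomial delay if we output leaves as they are found during a depth-first search.

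The only step needing care is correctness of the oracle on arbitrary (non-convex, non-osculating) input pairs. This is supplied by the reduction in the three step method: Lemma \ref{lem:shadow-closure} together with the path-edge argument reduces the question to shadow-closed osculating pairs, and Theorem \ref{thm:2-SAT-sep} handles those. I would also check that the pairs $A\cup\{u_i\}$, $B\cup\{u_{i+1}\}$ produced inside {\sc HalfspaceSep} stay disjoint after shadow-closure. If they intersect, the oracle simply returns NO for that $i$, since no halfspace can separate overlapping sets, and this is consistent with the rest of the argument.
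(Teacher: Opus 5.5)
Your proposal is correct and takes the same route as the paper. Both use flashlight (backtrack) search over partial assignments, with {\sc HalfspaceSep} as the polynomial-time {\sc Extension} oracle, since {\sc Extension} for the family of halfspaces is exactly {\sc HalfspaceSeparation}. You also correctly spell out details the paper leaves implicit: the cases with an empty side, the bound on explored nodes, polynomial delay, and possible non-disjointness after shadow-closure.
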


\begin{proof}
We use the classical enumeration method called backtrack or \emph{flashlight search} \cite{RT75} which we briefly explain here. Let $\mathcal{P}$ be the set of all solutions we want to enumerate. The algorithm is a DFS traversal of a partial solutions tree. Given two disjoint sets $A$ and $B$, the {\sc Extension} problem asks whether there exists a solution $S \in \mathcal{P}$ with $A \subseteq S$ and $B \cap S = \emptyset$. The flashlight search algorithm runs in output polynomial time whenever {\sc Extension} can be solved in polynomial time. When $\mathcal{P}$ is the set of halfspaces of a graph, $H^c \in \mathcal{P}$ whenever $H \in \mathcal{P}$ and thus {\sc Extension} is the same as {\sc HalfspaceSeparation}.
\end{proof}

\bibliographystyle{abbrv}
\bibliography{refs}
\end{document}